\newtheorem{theorem}{Theorem}
\newtheorem{lemma}[theorem]{Lemma}
\newtheorem{corollary}[theorem]{Corollary}
\newcommand{\BlackBox}{\rule{1.5ex}{1.5ex}}  
\newenvironment{proof}{\par\noindent{\bf Proof\ }}{\hfill\BlackBox\\[2mm]}
\newcommand{\EE}{\mathbb{E}}
\newcommand{\II}{\mathbb{I}}
\newcommand{\LL}{\mathbb{L}}
\newcommand{\RR}{\mathbb{R}}
\newcommand{\reals}{\mathbb{R}}
\newcommand{\test}[1]{\ensuremath{\llbracket #1 \rrbracket}}
\newcommand{\half}{\frac{1}{2}}
\newcommand{\thalf}{{\textstyle\half}}
\newcommand{\KL}{\mathrm{KL}}
\newcommand{\minimal}[1]{\underline{#1}}
\newcommand{\minL}{\minimal{L}}
\newcommand{\minLL}{\minimal{\LL}}
\newcommand{\abold}{\mathbf{a}}
\newcommand{\Xcal}{\mathcal{X}}
\newcommand{\nsp}{\hspace*{-3mm}}
\newcommand{\mypara}[1]{\vspace*{-1mm}\paragraph*{#1}}
\newcommand{\myparas}[1]{\vspace*{-5mm}\paragraph*{#1}}
\title{Generalised Pinsker Inequalities} 
\author{Mark D. Reid\\
{Australian National University}\\ 
{Canberra ACT 0200, Australia} \\
{Mark.Reid@anu.edu.au}\\
\\
Robert C. Williamson\\
{Australian National University and NICTA}\\ 
{Canberra ACT 0200, Australia} \\
{Bob.Williamson@anu.edu.au}
}
\begin{document}

\maketitle

\begin{abstract}
    {\sloppy
    We generalise the classical Pinsker inequality which relates variational
    divergence to Kullback-Liebler divergence in two ways: we consider {\em
    arbitrary} $f$-divergences in place of KL divergence, and we assume
    knowledge of a {\em sequence} of values of generalised variational
    divergences.  We then develop a best possible inequality for this doubly
    generalised situation. Specialising our result to the classical case
    provides a new and tight explicit bound relating KL to variational divergence
    (solving a problem posed by Vajda some 40 years ago). The solution relies
    on exploiting a connection between divergences  and  the
    Bayes risk of a learning problem via an integral representation.
    }
\end{abstract}

\section{Introduction}

Divergences such as the Kullback-Liebler and variational divergence arise
pervasively. They are a means of defining a notion of distance between two
probability distributions. The question often arises: given knowledge of one,
what can be said of the other? 
For all distributions $P$ and $Q$ on an arbitrary set, the classical Pinsker 
inequality relates the Kullback-Liebler divergence $KL(P,Q)$ and
variational divergence $V(P,Q)$ by $\mathrm{KL(P,Q)}\ge \half[V(P,Q)]^2$.
This simple classical bound is known not to be tight.
Over the past several decades a number of refinements have been given
(see Appendix~\ref{section:history} for a summary of past work). 

Vajda~\cite{Vajda1970} posed the question of determining a \emph{tight lower bound} on
KL-divergence in terms of variational divergence. This ``best possible Pinsker
inequality'' takes the form
\begin{equation}
L(V) :=\inf_{V(P,Q)=V} \mathrm{KL}(P,Q), \ \ \ \ V\in[0,2).
\label{eq:vajda-tight-lower-bound}
\end{equation}
Recently Fedotov et al.~\cite{FedotovTopsoe2003}
presented an {\em implicit} parametric solution of the form of the graph
of the bound as $(V(t),L(t))_{t\in\reals^+}$ where
\begin{eqnarray}
    V(t) &=& t\left(1-\left(\coth(t)-\textstyle\frac{1}{t}\right)^2\right),
     \label{eq:implicit-pinsker}\\
     L(t) &=& \log\left(\frac{t}{\sinh(t)}\right)+t\coth(t)
    -\frac{t^2}{\sinh^2(t)}\nonumber.
\end{eqnarray}

One can generalise the notion of a Pinsker inequality in at least two ways: 1)
replace KL divergence by a general $f$-divergence; and 2) bound the
$f$-divergence in terms of the known values of a {\em sequence} of generalised
variational divergences (defined later in this paper) $(V_{\pi_i})_{i=1}^n$,
$\pi_i\in(0,1)$. In this paper we study this doubly generalised problem and
provide a complete solution in terms of explicit, best possible bounds.

The main result is given below as Theorem~\ref{theorem:general-pinsker}. 
Applying it to specific $f$-divergences gives the following corollary\footnote{
	The terms $\test{V<1}$ and $\test{V\ge 1}$ are indicator functions and are
	defined below.
}.
\begin{corollary}\label{cor:special-cases}
	Let $V = V(P,Q)$ denote the variational divergence between the distributions
	$P$ and $Q$ and similarly for the other divergences in 
	Table~\ref{table:symmetric-divergences} below. 
	Then the following bounds for the divergences hold and are tight:
    \begin{equation*}
	h^2 \ge 2 - \sqrt{4-V^2};\ \ 
	\mathrm{J} \ge \textstyle 2V\ln \left(\frac{2+V}{2-V}\right);\ \ 
	\Psi \ge \frac{8V^2}{4-V^2}
	\end{equation*}
	\begin{eqnarray}
	\mathrm{I} &\ge& \left(\frac{1}{2}\!-\!\frac{V}{4}\right)\ln(2\!-\!V) \!+\!\left(\frac{1}{2}\!+\!\frac{V}{4}\right)\ln(2\!+\!V)\!-\!\ln(2)\nonumber\\
	\mathrm{T}&\!\!\!\!\!\ge\!\!\!\! &\textstyle
	   \ln\left(\frac{4}{\sqrt{4-V^2}}\right)-\ln(2) \nonumber \\
	\chi^2 &\!\!\!\!\!\ge\!\!\!\! & \textstyle\test{V<1} V^2 + \test{V\ge
	    1}\frac{V}{(2-V)}\label{eq:my-chi-squared-bound}\\
	\mathrm{KL} &\!\!\!\!\ge &\!\! \min_{\beta\in[V-2,2-V]}\textstyle 
	   \left(\frac{V+2-\beta}{4}\right)\ln\left(\frac{\beta-2-V}{\beta-2+V}\right)
    +\nonumber\\
    & &\nsp\left(\textstyle\frac{\beta+2-V}{4}\right)\ln\left(\textstyle\frac{\beta+2-V}{\beta+2+V}\right).
    \label{eq:my-pinsker-KL}
	\end{eqnarray}
\end{corollary}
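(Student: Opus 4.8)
The plan is to derive the whole corollary from the single inequality of Theorem~\ref{theorem:general-pinsker} by specialising it to $n=1$, the lone constraint being the value of the generalised variational divergence at prior $\pi=\half$, which is --- up to the normalisation fixed earlier --- just the ordinary variational divergence $V=V(P,Q)$. In that case the theorem tells us that the extremal pairs realising the tight lower bound on $I_f(P,Q)$ given $V(P,Q)=V$ may be taken supported on a two-point set, with a single free parameter $\beta\in[V-2,\,2-V]$: $Q$ has masses $\tfrac{2-V-\beta}{4},\tfrac{2+V+\beta}{4}$ and the likelihood ratios at the two points are $\tfrac{\beta-2-V}{\beta-2+V},\tfrac{\beta+2-V}{\beta+2+V}$, so that for a divergence with generator $f$,
\[
I_f(P,Q)\ \ge\ \min_{\beta\in[V-2,\,2-V]}\Big[\tfrac{2-V-\beta}{4}\,f\!\Big(\tfrac{\beta-2-V}{\beta-2+V}\Big)+\tfrac{2+V+\beta}{4}\,f\!\Big(\tfrac{\beta+2-V}{\beta+2+V}\Big)\Big],
\]
and this is best possible. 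This reads off most transparently from the $\KL$ case, where $f(u)=u\ln u$ reproduces \eqref{eq:my-pinsker-KL} verbatim, $\beta$-interval and all. Everything then reduces to carrying out this one-parameter minimisation for each generator in Table~\ref{table:symmetric-divergences}.

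For the five divergences $h^2,\mathrm{J},\mathrm{I},\mathrm{T},\Psi$ the generators are those of symmetric divergences, and the objective above is then an even function of $\beta$; its unique interior critical point is $\beta=0$, i.e.\ the symmetric pair $P=(\half+\tfrac V4,\,\half-\tfrac V4)$, $Q=(\half-\tfrac V4,\,\half+\tfrac V4)$, and a comparison against the endpoint value at $\beta=2-V$ --- where the objective is infinite or, using $V<2$, strictly larger --- shows $\beta=0$ is the minimiser. Substituting $\beta=0$ and simplifying gives the five bounds: e.g.\ the objective for $h^2$ is $2-\tfrac12\big(\sqrt{(2-\beta)^2-V^2}+\sqrt{(2+\beta)^2-V^2}\big)$, which at $\beta=0$ is $2-\sqrt{4-V^2}$; for $\Psi$ it is $\tfrac{8V^2}{4-V^2}$; and the Jeffreys, Jensen--Shannon and arithmetic--geometric-mean generators at the symmetric pair collapse to the three stated logarithmic expressions. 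Tightness of each bound is inherited directly from Theorem~\ref{theorem:general-pinsker}.

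Two cases need separate handling, and this is where the only real work lies. For $\chi^2$ (generator $f(u)=(u-1)^2$) the objective collapses to $\tfrac{4V^2}{4-(V+\beta)^2}$, which is minimised by making $|V+\beta|$ as small as the interval $\beta\in[V-2,\,2-V]$ permits: this allows $V+\beta=0$ when $V\le1$, giving $V^2$, but for $V>1$ the unconstrained optimum $\beta=-V$ lies outside the interval, so the minimum is at the endpoint $\beta=V-2$, at which the extremal $P$ degenerates to a point mass, giving $\tfrac{V}{2-V}$; together these are exactly \eqref{eq:my-chi-squared-bound}, with the $V=1$ threshold marking where the probability-simplex constraint first binds. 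For $\KL$ the stationarity equation in $\beta$ is transcendental --- of the same flavour as, and in fact equivalent to, the equation behind the Fedotov et al.\ parametrisation \eqref{eq:implicit-pinsker} --- so there is no elementary closed form and the bound is simply recorded as the one-parameter minimum \eqref{eq:my-pinsker-KL}. The main obstacle throughout is thus bookkeeping rather than ideas: locating the minimising $\beta$ for each generator, and in particular detecting when an endpoint of the $\beta$-interval becomes active, which is precisely what produces the two-branch $\chi^2$ bound.
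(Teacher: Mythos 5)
Your route is essentially the paper's own: the corollary is proved there via Theorem~\ref{theorem:special-cases-pinsker}, i.e.\ Theorem~\ref{theorem:general-pinsker} specialised to $n=1$, $\pi_1=\thalf$, $\psi_1=\thalf-\frac{V}{4}$, with one free slope $a$ and $\beta=2a\in[V-2,2-V]$; your two-point family is exactly that extremal piecewise-linear risk curve realised as an explicit pair (its middle slope is $\beta/2$ and its value at $\thalf$ is $\psi_1$). The only real difference is computational: the paper evaluates $\int\phi_{\abold}\gamma_f\,d\pi$ with the weights of Table~\ref{table:symmetric-divergences}, whereas you evaluate $\II_f$ directly on the two-point pair; by Theorem~\ref{thm:f-intrep} these coincide, and indeed your master formula reproduces (\ref{eq:my-pinsker-KL}) verbatim, and your $\chi^2$ analysis (interior optimum $\beta=-V$ for $V\le1$, endpoint $\beta=V-2$ otherwise) matches the paper's optimal $a^*$ exactly. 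Two caveats. First, ``its unique interior critical point is $\beta=0$'' is asserted rather than shown; the clean repair is to note that each term of your objective is a perspective $q\,f(p/q)$ with $p,q$ affine in $\beta$, hence convex in $\beta$, so evenness under $\beta\mapsto-\beta$ (which swaps $P$ and $Q$, and so leaves symmetric divergences invariant) already forces the minimum at $\beta=0$; this parallels, and actually tightens, the paper's own terse ``optimal $a=0$'' step. Second, your claim that the $\beta=0$ value ``collapses to the stated expression'' fails for Jeffreys: the symmetric pair gives $\mathrm{J}=V\ln\frac{2+V}{2-V}$, which agrees with the derivation inside the paper's proof of Theorem~\ref{theorem:special-cases-pinsker} but not with the factor $2V$ displayed in the corollary; since your own $\beta=0$ pair has $\mathrm{J}$ strictly below $2V\ln\frac{2+V}{2-V}$ for $V>0$, the displayed $\mathrm{J}$ line cannot be a (tight) lower bound as printed, and your write-up endorses this spurious factor of $2$ instead of detecting it. Everything else (the $h^2$, $\Psi$, $\mathrm{I}$, $\mathrm{T}$ evaluations, the degeneration of $P$ to a point mass at $\beta=V-2$, and tightness via Corollary~\ref{corollary:all-bayes-risk-curves-possible} under the connected-component assumption) checks out and is the paper's argument in a cleaner computational dress.
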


The proof of the main result depends in an essential way on a learning theory
perspective. We make use of an integral representation of $f$-divergences in
terms of DeGroot's statistical information---the
difference between a prior and posterior Bayes risk\cite{DeGroot1962}. 
By using the
relationships between the generalised variational divergence and the
0-1 misclassification loss we
are able to use an elementary but somewhat intricate geometrical argument to
obtain the result.

The rest of the paper is organised as follows. 
Section~\ref{section:background} collects background results upon which we
rely. The main result of the paper is stated in Section~\ref{section:main} and
its proof
presented in in Section 
\ref{section:pinsker-proofs}. Appendix~\ref{section:history} summarises
previous work.

\section{Background Results and Notation}
\label{section:background}
In this section we collect notation and background concepts and results we need
for the main result.

\subsection{Notational Conventions}

The substantive objects are defined within the body of the paper. Here we 
collect
elementary notation and the conventions we adopt throughout.
We write $x\wedge y:=\min(x,y)$, $x\vee y :=\max(x,y)$ and 
$\test{p}=1$ if $p$ is true and $\test{p}=0$ otherwise.
The generalised function $\delta(\cdot)$ is defined by $\int_a^b \delta(x)
f(x)dx=f(0)$ when $f$ is continuous at $0$ and $a<0<b$. For convenience,
we will define $\delta_c(x) := \delta(x-c)$.
The real numbers are denoted $\reals$, the non-negative reals $\reals^+$;
Sets are in calligraphic font: $\mathcal{X}$.
Vectors are written in bold font: $\abold,\bm{\alpha},\bm{x}\in\reals^m$. 
We will often have cause to take expectations ($\EE$) over random variables. 
We write such quantities in blackboard bold:
$\II$, $\LL$, \emph{etc.} 
The lower bound on quantities with an intrinsic lower bound (e.g. the Bayes 
optimal loss) are written with an underbar: $\minL$, $\minLL$. 
Quantities related by
double integration recur in this paper and we notate the starting point in
lower case, the first integral with upper case, and the second integral in
upper case with an overbar: $\gamma$, $\Gamma$, $\bar{\Gamma}$. 

\subsection{Csisz\'{a}r $f$-divergences}
\label{sub:fdivergence}

The class of \emph{$f$-divergences} \citep{AliSilvey1966,Csiszar1967} 
provide a rich set of relations that can be used to measure the separation of 
the distributions.
An $f$-divergence is a function that measures the
``distance'' between a pair of distributions $P$ and $Q$ defined over a space
$\mathcal{X}$ of observations. 
Traditionally, the $f$-divergence of $P$ from $Q$ is defined for any 
convex $f:(0,\infty)\to\RR$ such that $f(1) = 0$. In this case, the 
$f$-divergence is
\begin{equation}\label{eq:fdiv1}
	\II_f(P,Q) = \EE_Q \left[ f\left(\frac{dP}{dQ}\right) \right]
			   = \int_{\Xcal} f\left(\frac{dP}{dQ}\right)\,dQ
\end{equation}
when $P$ is absolutely continuous with respect to $Q$ and equal $\infty$ 
otherwise.\footnote{
	Liese and Miescke~\cite[pg. 34]{Liese2008} give a definition that does not require 
	absolute continuity. 
}

All $f$-divergences are non-negative and zero when $P=Q$, that is, 
$\II_f(P,Q) \ge 0$ and $\II_f(P,P) = 0$ for all distributions $P, Q$. 
In general, however, they are not metrics, since they are not necessarily 
\emph{symmetric} (\emph{i.e.}, for all distributions
$P$ and $Q$, $\II_{f}(P,Q) = \II_{f}(Q,P)$) and do not necessarily satisfy
the triangle inequality.

Several well-known divergences correspond to specific choices of the function 
$f$ \cite[\S 5]{AliSilvey1966}.
One divergence central to this paper is the \emph{variational divergence} 
$V(P,Q)$ which is obtained by setting $f(t) = |t-1|$ in Equation~\ref{eq:fdiv1}.
It is the only $f$-divergence that is a true metric on the space of 
distributions over $\Xcal$ \citep{Khosravifard:2007} and gets its name from its 
equivalent definition in the variational form
\begin{equation}
	\label{eq:fdivvar}
	V(P,Q) = 2 \| P - Q \|_{\infty} 
	:= 2 \sup_{A \subseteq \Xcal} | P(A) - Q(A) |.
\end{equation}
(Some authors define $V$ without the 2 above.)
Furthermore, the variational divergence is one of a family of ``primitive'' 
or ``simple'' $f$-divergences discussed in Section~\ref{sub:intrep}. 
These are primitive in the sense that all other $f$-divergences can be expressed 
as a weighted sum of members from this family. 
 
Another well known $f$-divergence is the Kullback-Leibler (KL) divergence 
$\mathrm{KL}(P,Q)$, obtained by setting $f(t) = t \ln(t)$ in 
Equation~\ref{eq:fdiv1}. Others are given in 
Table~\ref{table:symmetric-divergences}.

As already mentioned in the introduction, the KL and variational divergences
satisfy the classical Pinsker's inequality which states that for all 
distributions $P$ and $Q$ over some common space $\Xcal$
\begin{equation}
    \mathrm{KL}(P,Q) \ge \thalf [V(P,Q)]^2.
\end{equation}

\subsection{Integral Representations of $f$-divergences}\label{sub:intrep}
The main tool in our proof of Theorem~\ref{theorem:general-pinsker} is the 
integral representation of $f$-divergences, first articulated by 
\"{O}sterreicher and Vajda \cite{OsterreicherVajda1993a} and 
Gutenbrunner \cite{Gutenbrunner:1990}. 
They show that an $f$-divergence can be represented as a weighted integral of 
the ``simple'' divergence measures 
\begin{equation}
	V_\pi(P,Q) = \II_{f_\pi}(P,Q),
\end{equation}
where $f_{\pi}(t) := \min\{\pi, 1-\pi\} - \min\{1-\pi, \pi t\}$ for 
$\pi\in[0,1]$.

\begin{theorem}\label{thm:f-intrep}
	For any convex $f$ such that $f(1)=0$, the $f$-divergence $\II_f$ 
	can be expressed, for all distributions $P$ and $Q$, as
	\begin{equation}\label{eq:LV-integral-rep}
		\II_f(P,Q) = \int_0^1 V_{\pi}(P,Q)\,\gamma_f(\pi)\,d\pi
	\end{equation}
	where the (generalised) function
	\begin{equation}\label{eq:gamma-f-relationship}
		\gamma_f(\pi) := \frac{1}{\pi^3} f''\left(\frac{1-\pi}{\pi}\right).
	\end{equation}
\end{theorem}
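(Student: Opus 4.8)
The plan is to prove the integral representation~\eqref{eq:LV-integral-rep} by first computing, for each fixed pair $(P,Q)$, an explicit formula for the primitive divergence $V_\pi(P,Q)$ as a function of $\pi$, and then recognising the claimed identity as a statement about recovering a convex function from its second derivative. Writing $\rho = dP/dQ$ for the Radon--Nikodym derivative (assuming $P \ll Q$; the singular case is handled separately by a limiting or truncation argument), we have $f_\pi(t) = \min\{\pi,1-\pi\} - \min\{1-\pi,\pi t\}$, so that
\begin{equation*}
	V_\pi(P,Q) = \EE_Q\!\left[ \min\{\pi, 1-\pi\} - \min\{1-\pi, \pi \rho\} \right]
	= \min\{\pi,1-\pi\} - \EE_Q\!\left[ \min\{1-\pi, \pi\rho\} \right].
\end{equation*}
The key observation is that, viewed as a function of $\pi \in (0,1)$, the map $\pi \mapsto V_\pi(P,Q)$ is concave (it is an expectation of concave functions of $\pi$, since $\pi \mapsto \min\{1-\pi,\pi t\}$ is the minimum of two affine functions hence concave, and $\min\{\pi,1-\pi\}$ is concave), it vanishes at $\pi \in \{0,1\}$, and its second derivative in the distributional sense can be computed pointwise wherever $\rho$ is a constant and picks up Dirac masses along the ``kink'' set.

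First I would differentiate $\EE_Q[\min\{1-\pi,\pi\rho\}]$ twice in $\pi$. For a fixed value $\rho = r$, the inner expression $\min\{1-\pi,\pi r\}$ switches from $\pi r$ to $1-\pi$ at $\pi = 1/(1+r)$, i.e.\ at $r = (1-\pi)/\pi$; its second derivative in $\pi$ is therefore a negative Dirac mass concentrated at that switching point. Carrying the $Q$-expectation through and changing variables from the location of the kink back to $\pi$, one finds that $-\,d^2 V_\pi(P,Q)/d\pi^2$, as a measure in $\pi$, has a density proportional to $\pi^{-3}$ times the ``density of $\rho$ at $(1-\pi)/\pi$'' under $Q$ — this is exactly the object that, when integrated against $f''((1-\pi)/\pi)/\pi^3$ over a generic $f$, reproduces $\II_f(P,Q)$. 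The cleanest way to organise this is to integrate by parts twice in~\eqref{eq:LV-integral-rep}: since $\gamma_f(\pi) = \pi^{-3} f''((1-\pi)/\pi)$, substituting $u = (1-\pi)/\pi$ (so $\pi = 1/(1+u)$, $d\pi = -(1+u)^{-2}\,du$) converts $\int_0^1 V_\pi \gamma_f(\pi)\,d\pi$ into an integral over $u \in (0,\infty)$ of $V_{\pi(u)}(P,Q)$ against $f''(u)$ with the correct Jacobian, and then two integrations by parts in $u$ move both derivatives off $f$ and onto the (now twice-differentiable-in-the-distributional-sense) function $u \mapsto V_{\pi(u)}(P,Q)$, using the boundary behaviour $f(1)=0$ together with $V_\pi \to 0$ as $\pi \to 0,1$ to kill the boundary terms. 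What remains is to check that $(\partial^2/\partial u^2) V_{\pi(u)}(P,Q)$, integrated against $f$, yields $\EE_Q[f(\rho)]$; this reduces to the elementary identity that $\min\{1,ut\} = \int\!\int (\text{second difference kernel})$, or equivalently that $f(r) = f(1) + f'(1)(r-1) + \int_0^\infty (r-u)_+ \, f''(u)\,du$ for convex $f$ normalised at $1$ — a standard Taylor-with-integral-remainder fact — combined with $\EE_Q[(\rho-u)_+]$ being expressible through the $V_\pi$'s.

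The main obstacle I expect is rigour at the level of generalised functions: $V_\pi(P,Q)$ need not be twice differentiable in the classical sense (indeed $\gamma_f$ is explicitly a distribution, e.g.\ a sum of Diracs when $f$ is piecewise linear, as with $f(t)=|t-1|$ giving $V$ itself), so the two integrations by parts must be justified either by approximating $f$ by smooth convex functions and passing to the limit via monotone/dominated convergence, or by interpreting all identities in the sense of distributions paired against the convex test function $f$. A secondary technical point is the non-absolutely-continuous case: when $P \not\ll Q$ one must verify that both sides equal $+\infty$, which follows because some positive $Q$-null, $P$-positive mass forces $\rho$ to have an atom ``at infinity,'' contributing $\lim_{u\to\infty} f(u)/u \cdot (\text{mass}) = +\infty$ on the right whenever $f$ is such that $\II_f$ is genuinely infinite, matched by the corresponding divergence of the integral on the left. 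Once these two analytic subtleties are dispatched, the computation of the second-order behaviour of $V_\pi$ and the change of variables are routine.
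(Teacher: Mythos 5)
Your overall route is the right one, and in fact it is the route the paper itself points to (it gives no proof of Theorem~\ref{thm:f-intrep}, citing \"{O}sterreicher--Vajda and the generalised Taylor expansion of Liese--Vajda and Reid--Williamson): change variables $u=(1-\pi)/\pi$, note $\gamma_f(\pi)\,d\pi = (1+u)f''(u)\,du$, and observe that $(1+u)V_{\pi(u)}(P,Q)=\min(1,u)-\EE_Q[\min(u,\rho)]$, whose distributional second derivative in $u$ is $-\delta_1+\mathrm{law}_Q(\rho)$, so pairing with $f$ gives $\EE_Q[f(\rho)]-f(1)=\II_f(P,Q)$. However, the step where the actual content lives is misstated in two ways. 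First, the ``standard Taylor fact'' you invoke, $f(r)=f(1)+f'(1)(r-1)+\int_0^\infty (r-u)_+f''(u)\,du$, is false: for $f(t)=(t-1)^2$ it gives $r^2$ instead of $(r-1)^2$, and even after taking $\EE_Q$ (using $\EE_Q[\rho]=1$) it overshoots by $\int_0^1(1-u)\,df'_+(u)$, e.g.\ it would yield $\chi^2(P,Q)+1$ rather than $\chi^2(P,Q)$. The correct identity uses the two-sided kernel $(r-u)_+\test{u\ge 1}+(u-r)_+\test{u<1}$, and correspondingly $(1+u)V_{\pi(u)}$ equals $\EE_Q[(u-\rho)_+]$ for $u\le 1$ but $\EE_Q[(\rho-u)_+]$ for $u\ge 1$, the switch at $u=1$ being exactly where $\EE_Q[\rho]=1$ is used; your closing claim that ``$\EE_Q[(\rho-u)_+]$ is expressible through the $V_\pi$'s'' is only true on one side of $u=1$. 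This is not a routine check to be deferred --- it is the proof.

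Second, the bookkeeping around the integrations by parts is off. The derivatives must be moved onto $u\mapsto(1+u)V_{\pi(u)}$, not onto $V_{\pi(u)}$ as you write: the second derivative of $V_{\pi(u)}$ alone, paired with $f$, does not reproduce $\II_f$. Moreover $f(1)=0$ has nothing to do with the boundary terms at $u\in\{0,\infty\}$; its role is to cancel the interior Dirac at $u=1$ coming from the kink of $\pi\wedge(1-\pi)$ at $\pi=\thalf$ (the $-\delta_1$ above). The boundary terms themselves do not all vanish: as $u\to 0^+$ one has $\partial_u[(1+u)V_{\pi(u)}]\to Q(\rho=0)$, so the boundary contribution is $Q(\rho=0)f(0^+)$, which is a genuine part of $\EE_Q[f(\rho)]$ whenever $Q\not\ll P$ (and may be $+\infty$); a similar tail estimate is needed at $u\to\infty$. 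These points are all repairable --- keep the factor $(1+u)$ with $V_{\pi(u)}$, use the two-sided Taylor kernel with Tonelli (the kernel and $df'_+$ are nonnegative, so no smoothing of $f$ is needed), and account for the $u\to0$ boundary/atom explicitly --- but as written the final verification would not go through.
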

Recently, this theorem has been shown to be a direct consequence
of a generalised Taylor's expansion for convex functions 
\cite{LieVaj06,ReidWilliamson2009}.

Even when $f$ is not twice differentiable, the convexity of $f$ implies
its continuity and so its right-hand derivative $f'_{+}$ exists. In this case, 
$\gamma$ is interpreted distributionally in terms of $df'_{+}$.
For example, when $f(t) = |t-1|$ then $f''(t) = 2\delta(t-1)$ and so 
$\gamma_f(\pi) = 2\frac{1}{\pi^3}\delta(1-2\pi) = 16\delta_{\half}(\pi)$.

The divergences $V_\pi$ for $\pi\in[0,1]$ can be seen as a family of generalised 
variational divergences since, $df'_{+}(t)$ for any member of this family is 
$\pi\delta(t-\frac{1-\pi}{\pi})$ and so
$\gamma_{f_\pi} = \frac{1}{\pi^2}\delta_{\frac{1-\pi}{\pi}}$.
Thus, for $\pi=\thalf$ we have $\gamma_{f_{\half}} = 4\delta_{\half}$,
that is, four times the 
$\gamma$ function for variational divergence and so by 
(\ref{eq:LV-integral-rep}) we see that 
\begin{equation}\label{eq:V-Vpi}
	V(P,Q) = 4V_{\half}(P,Q).
\end{equation}

Theorem~\ref{thm:f-intrep} shows that knowledge of the values of $V_\pi(P,Q)$ 
for all $\pi\in[0,1]$ is sufficient to compute the value of $\II_f(P,Q)$ for any
$f$-divergence, since the weight function $\gamma$ is dependent only on $f$, 
not $P$ and $Q$. All of the generalised Pinsker bounds we derive are found by
asking how knowledge of a the value of a finite number of $V_\pi(P,Q)$ 
constrains the overall value of $\II_f(P,Q)$.

Table~\ref{table:symmetric-divergences} summarises the weight functions $\gamma$ 
for a number of $f$-divergences that appear in the literature. 
These are used in the proof of specific bounds in 
Corollary~\ref{cor:special-cases}.

\begin{sidewaystable*}
    \begin{center}
    \renewcommand{\arraystretch}{1.3}
    \begin{tabular*}{14cm}{llll}
	\hline
	Symbol 
	& Divergence Name
	& $f(t)$ 
	& $\gamma(\pi)$ 
	\\
	\hline\hline
	$\rule{0pt}{2.4ex} V(P,Q)$ 
	& Variational 
	& $|t-1|$
	& $16\delta\left(\pi-\frac{1}{2}\right)$ 
	\\[0.2mm]
	\hline
	$\mathrm{KL}(P,Q) $ 
	& Kullback-Liebler 
	& $t\ln t$
	& $ \frac{1}{\pi^2 (1-\pi)}$ 
	\\[0.2mm]
	\hline
	$\rule{0pt}{2.4ex}\Delta(P,Q)$ 
	& Triangular Discrimination 
	& $(t-1)^2/(t+1)$
	& 8 
	\\
	\hline
	$\rule{0pt}{2.4ex}\mathrm{I}(P,Q)$	
	& Jensen-Shannon 
	& $\frac{t}{2}\ln t - \frac{(t+1)}{2}\ln(t+1) +\ln 2$ 
	& $\frac{1}{2\pi(1-\pi)}$ 
	\\[0.2mm]
	\hline
	$\rule{0pt}{3ex} \mathrm{T}(P,Q)$ 
	& Arithmetic-Geometric Mean 
	& $\left(\frac{t+1}{2}\right)\ln\left(\frac{t+1}{2\sqrt{t}}\right)$
	& $\frac{(2\pi-\frac{1}{2})^{2}+\frac{1}{2}}{4\pi^2(\pi-1)^2}$ 
	\\[0.2mm]
	\hline
	$\rule{0pt}{2.4ex}\mathrm{J}(P,Q)$ 
	& Jeffreys 
	& $(t-1)\ln(t)$
	& $\frac{1}{\pi^2 (1-\pi)^2}$ 
	\\[0.2mm]
	\hline
	$\rule{0pt}{2.4ex}h^2(P,Q)$ 
	& Hellinger 
	& $(\sqrt{t}-1)^2$
	& $\frac{1}{2[\pi(1-\pi)]^{3/2}}$ 
	\\[0.2mm]
	\hline
	$\rule{0pt}{3ex} \chi^2(P,Q)$ 
	& Pearson $\chi$-squared 
	& $(t-1)^2$
	& $\frac{2}{\pi^3}$ 
	\\[0.2mm]
	\hline
	$\rule{0pt}{3ex}\Psi(P,Q)$ 
	& Symmetric $\chi$-squared 
	& $\frac{(t-1)^2(t+1)}{t}$ 
	& $\frac{2}{\pi^3}+\frac{2}{(1-\pi)^3}$ 
	\\[0.2mm]
	\hline
    \end{tabular*}
    \caption{
     Divergences and their corresponding functions $f$ and weights $\gamma$;
    confer~\cite{Taneja2005,LieVaj06}. 
    Tops{\o}e~\cite{Topsoe:2000} calls $C(P,Q)=2 \mathrm{I}(P,Q)$ and
    $\tilde{C}(P,Q)=2\mathrm{T}(P,Q)$ the Capacitory and Dual Capacitory
    discrimination respectively. Several of the above divergences are
	``symmetrised'' versions of others. For example, 
	$T(P,Q)=\frac{1}{2}[\mathrm{KL}(\frac{P+Q}{2},P) +\mathrm{KL}(\frac{P+Q}{2},Q)]$, $I(P,Q)=\frac{1}{2}[\mathrm{KL}(P,\frac{P+Q}{2})+\mathrm{KL}(Q,\frac{P+Q}{2})]$,
	$\mathrm{J}(P,Q)=\mathrm{KL}(P,Q)+\mathrm{KL}(Q,P)$, and
	$\Psi(P,Q)=\chi^2(P,Q)+\chi^2(Q,P)$.
    \label{table:symmetric-divergences}}
\end{center}
\end{sidewaystable*}

Before we can prove the main result, we need to establish some properties of
the general variational divergences. In particular, we will make use of 
their relationship to Bayes risks for 0-1 loss.

\subsection{Divergence and Risk}
Let $\minLL(\pi,P,Q)$ denote the 0-1 Bayes risk for a classification problem
in which observations are drawn from $\Xcal$ using the mixture distribution 
$M = \pi P + (1-\pi)Q$, and each observation $x\in\Xcal$ is assigned a 
positive label with probability $\eta(x) := \pi\frac{dP}{dM}(x)$. 
If $r = r(x) \in\{0,1\}$ is a label prediction for a particular 
$x\in\Xcal$, the 0-1 expected loss for that observation is 
\[
	L(r,\pi, p, q) 
	= (1-\pi)q\test{r = 1} + \pi p\test{r = 0}.
\]
where $q = \frac{dQ}{dM}(x)$ and $p = \frac{dP}{dM}(x)$ are densities.
Thus, 
the full expected 0-1 loss of a predictor $r :\Xcal \to \{0,1\}$ is given by 
$\LL(r, \pi, P, Q) := \EE_M[L(r(x),\pi, p(x), q(x))]$
and it is well known (\emph{e.g.}, \cite{DevGyoLug96}) that its Bayes risk is 
obtained by the Bayes optimal predictor 
$r^*(x) := \test{\eta(x) \ge \half}$. That is,
\begin{equation}
	\minLL(\pi,P,Q) 
	:= \inf_{r} \LL(r,\pi,P,Q) 
	= \LL(r^*, \pi, P, Q), \label{eq:bayesrisk}
\end{equation}
where the infimum is taken over all ($M$-measurable) predictors 
$r:\Xcal\to\{0,1\}$.
So, by the definition of $\eta(x)$ and noting that $\eta \ge \half$ iff 
$\pi p \ge \half (\pi p + (1-\pi)q)$ which holds iff
$\pi p \ge (1-\pi)q$ we see that
the 0-1 Bayes risk can be expressed as
\begin{eqnarray}
	\lefteqn{\minLL(\pi,P,Q)} \\
	\nsp&\nsp=& \EE_M [ 
			(1-\pi)q \test{\eta  \ge \thalf} +\ \pi p \test{\eta < \thalf} 
		] \nonumber \\
	\nsp&\nsp=& (1-\pi)\EE_Q[\test{\pi p \ge (1-\pi)q}] 
	    + \pi\EE_P[\test{\pi p < (1-\pi)q }]. \nonumber
\end{eqnarray}

We now observe that 
\[
	q f_\pi\left(\frac{p}{q}\right) 
	= ((1-\pi)\wedge\pi) q - 
	\begin{cases}
		(1-\pi)q, & q(1-\pi) \le \pi p \\
		\pi p,    & q(1-\pi) > \pi p
	\end{cases}
\]
and so by noting that 
$\EE_Q \left[ f_\pi\left(\frac{dP}{dQ}\right)\right] 
= \EE_M \left[ q f_\pi\left(\frac{p}{q}\right) \right]$ we have established the 
following lemma.
\begin{lemma}\label{lem:vandrisk}
	For all $\pi\in[0,1]$ and all distributions $P$ and $Q$, the generalised
	variational divergence satisfies
	\begin{eqnarray}
		V_\pi(P,Q)
		&=& (1-\pi)\wedge\pi - \minLL(\pi,P,Q) \label{eq:statinfo}.
	\end{eqnarray}	
\end{lemma}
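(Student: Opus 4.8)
The plan is to unfold the definition of $V_\pi$ as an $f$-divergence, push the expectation through to the mixture $M = \pi P + (1-\pi)Q$, and then read off the identity by comparison with the formula for $\minLL(\pi,P,Q)$ derived immediately above the lemma. First I would record the pointwise algebraic identity underlying everything: since $f_\pi(t) = \min\{\pi,1-\pi\} - \min\{1-\pi,\pi t\}$, multiplying by $q\ge 0$ and using $q\min\{1-\pi,\pi p/q\} = \min\{(1-\pi)q,\pi p\}$ gives
\[
	q\, f_\pi\!\left(\tfrac{p}{q}\right) = ((1-\pi)\wedge\pi)\,q - \min\{(1-\pi)q,\,\pi p\},
\]
which is exactly the case split displayed before the lemma statement, and which stays well-defined and correct when $q=0$ (both sides are then $0$).

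Next I would change measure: writing $p = dP/dM$ and $q = dQ/dM$, one has $dP/dQ = p/q$ $Q$-a.e., hence $V_\pi(P,Q) = \II_{f_\pi}(P,Q) = \EE_Q[f_\pi(dP/dQ)] = \EE_M[q\, f_\pi(p/q)]$. Applying the identity above and taking $\EE_M$ termwise, the first term contributes $((1-\pi)\wedge\pi)\,\EE_M[q] = (1-\pi)\wedge\pi$ since $\EE_M[q] = Q(\Xcal) = 1$, and the second term contributes
\[
	\EE_M[\min\{(1-\pi)q,\pi p\}] = \EE_M\!\big[(1-\pi)q\,\test{(1-\pi)q\le\pi p} + \pi p\,\test{(1-\pi)q > \pi p}\big].
\]
Comparing this with the displayed expression $\minLL(\pi,P,Q) = \EE_M[(1-\pi)q\,\test{\eta\ge\thalf} + \pi p\,\test{\eta<\thalf}]$ and the equivalence $\eta \ge \thalf \iff \pi p \ge (1-\pi)q$ already established, the second term equals $\minLL(\pi,P,Q)$; on the tie set $\{(1-\pi)q = \pi p\}$ the two summands coincide, so the precise placement of the boundary in the indicator is immaterial. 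Adding the two contributions gives $V_\pi(P,Q) = (1-\pi)\wedge\pi - \minLL(\pi,P,Q)$, as claimed.

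The only point requiring care — and the one obstacle, such as it is — is the measure-theoretic bookkeeping around the change of variables to $M$: justifying $\EE_Q[\,\cdot\,] = \EE_M[q\,\cdot\,]$, handling the set $\{q = 0\}$ (where $P$ need not be absolutely continuous with respect to $Q$, so that one should either invoke the Liese--Miescke form of $\II_f$ mentioned in the footnote or restrict attention to the absolutely continuous case), and confirming that the tie-breaking at $\eta = \thalf$ does not change the value. None of this is deep; the entire substance of the lemma is the pointwise identity for $f_\pi$ together with the computation of $\minLL(\pi,P,Q)$ that precedes the statement.
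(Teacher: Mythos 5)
Your proposal is correct and follows essentially the same route as the paper: the pointwise identity $q\,f_\pi(p/q) = ((1-\pi)\wedge\pi)q - \min\{(1-\pi)q,\pi p\}$, the change of measure $\EE_Q[f_\pi(dP/dQ)] = \EE_M[q\,f_\pi(p/q)]$, and comparison with the displayed formula for $\minLL(\pi,P,Q)$ are exactly the steps the paper uses. Your added remarks on tie-breaking at $\eta=\thalf$ and the $\{q=0\}$ set are sensible but do not change the argument.
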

Thus, the value of $V_\pi(P,Q)$ can be understood via the 0-1 Bayes risk for a
classification problem with label-conditional distributions $P$ and $Q$ and 
prior probability $\pi$ for the positive class.
This relationship between $f$-divergence and Bayes risk is not new. 
It was established in a more general setting by \"{O}sterreicher and Vajda 
\cite{OsterreicherVajda1993a} (who note that the term in (\ref{eq:statinfo}) is 
the statistical information for 0-1 loss) and later by Nguyen 
\emph{et al.} \cite{Nguyen:2005}.

\subsection{Concavity of 0-1 Bayes Risk Curves}\label{sub:concavity}

For a given pair of distributions $P$ and $Q$ the set of values for 
$\minLL(\pi,P,Q)$ as $\pi$ varies over $[0,1]$ can be visualised as a curve
as in Figure~\ref{figure:bounds}.

\begin{lemma}\label{lem:concavity}
	For all distributions $P$ and $Q$, 
	the function $\pi \mapsto \minLL(\pi, P, Q)$ is concave.
\end{lemma}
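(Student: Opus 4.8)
The plan is to exploit the formula for $\minLL(\pi,P,Q)$ derived just above Lemma~\ref{lem:vandrisk}, namely
\[
	\minLL(\pi,P,Q) = (1-\pi)\,\EE_Q[\test{\pi p \ge (1-\pi)q}] + \pi\,\EE_P[\test{\pi p < (1-\pi)q}],
\]
and to recognise it as a pointwise infimum of affine functions of $\pi$. Concretely, recall from equation~(\ref{eq:bayesrisk}) that $\minLL(\pi,P,Q) = \inf_r \LL(r,\pi,P,Q)$ where the infimum ranges over all $M$-measurable predictors $r:\Xcal\to\{0,1\}$. The key observation is that, for a \emph{fixed} predictor $r$, the map $\pi \mapsto \LL(r,\pi,P,Q)$ is affine in $\pi$: from the definition $\LL(r,\pi,P,Q) = \EE_M[L(r(x),\pi,p(x),q(x))]$ with $L(r,\pi,p,q) = (1-\pi)q\test{r=1} + \pi p\test{r=0}$, and since $M = \pi P + (1-\pi)Q$ so that $\EE_M[\,\cdot\,] = \pi\EE_P[\,\cdot\,] + (1-\pi)\EE_Q[\,\cdot\,]$, one gets $\LL(r,\pi,P,Q) = (1-\pi)\,\EE_Q[\test{r=1}] + \pi\,\EE_P[\test{r=0}] = (1-\pi)\,Q(\{r=1\}) + \pi\,P(\{r=0\})$, which is of the form $a_r + b_r\pi$ with $a_r,b_r$ independent of $\pi$.

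The second and final step is the standard fact that a pointwise infimum of a family of affine (hence concave) functions is concave: if $g_r(\pi) = a_r + b_r\pi$ for each $r$ in some index set, then $g(\pi) := \inf_r g_r(\pi)$ satisfies, for any $\pi_0,\pi_1\in[0,1]$ and $\lambda\in[0,1]$, $g_r(\lambda\pi_0 + (1-\lambda)\pi_1) = \lambda g_r(\pi_0) + (1-\lambda)g_r(\pi_1) \ge \lambda g(\pi_0) + (1-\lambda)g(\pi_1)$, and taking the infimum over $r$ on the left gives $g(\lambda\pi_0 + (1-\lambda)\pi_1) \ge \lambda g(\pi_0) + (1-\lambda)g(\pi_1)$. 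Applying this with $g = \minLL(\,\cdot\,,P,Q)$ yields the claim. (As a minor sanity check, $\minLL(0,P,Q) = \minLL(1,P,Q) = 0$ and $\minLL$ is nonnegative, consistent with concavity on $[0,1]$.)

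I do not expect any real obstacle here; the only point requiring a little care is the measure-theoretic bookkeeping in verifying that $\pi\mapsto\LL(r,\pi,P,Q)$ is genuinely affine — in particular that the set $\{r=1\}$ and the quantities $Q(\{r=1\})$, $P(\{r=0\})$ do not themselves depend on $\pi$ (they don't, once $r$ is fixed as a function on $\Xcal$), and that the measurability/integrability needed to split $\EE_M$ into $\pi\EE_P + (1-\pi)\EE_Q$ is unproblematic since all integrands are bounded indicators. One could alternatively phrase the argument via Lemma~\ref{lem:vandrisk}: since $V_\pi(P,Q) = (1-\pi)\wedge\pi - \minLL(\pi,P,Q)$ and $\pi\mapsto(1-\pi)\wedge\pi$ is concave, concavity of $\minLL$ is equivalent to $\pi\mapsto V_\pi(P,Q)$ being convex; but the infimum-of-affine-functions route is the cleanest and is the one I would write up.
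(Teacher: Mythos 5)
Your proof is correct, but it takes a different route from the paper. You represent the Bayes risk as an infimum over \emph{all} fixed predictors $r$ of the maps $\pi \mapsto \LL(r,\pi,P,Q) = (1-\pi)\,Q(\{r=1\}) + \pi\,P(\{r=0\})$, each of which is affine in $\pi$, and then invoke the fact that a pointwise infimum of affine functions is concave. The paper instead fixes the Bayes-optimal predictor, writes the conditional risk pointwise as $L(r^*,\pi,p,q) = \min\{(1-\pi)q,\pi p\}$ --- a minimum of two linear functions of $\pi$, hence concave at each $x$ --- and concludes by noting that integrating a family of concave functions preserves concavity. Your route buys a little robustness: you never need to identify $r^*$ or push the minimum inside the expectation, and the argument applies verbatim whenever the per-predictor risk is affine in the prior; note also that it is essential that the functions in your family are affine, since an infimum of merely concave functions need not be concave. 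The paper's route, in exchange, delivers the explicit pointwise form $\min\{(1-\pi)q,\pi p\}$ which it reuses elsewhere (e.g.\ in Lemma~\ref{lem:vandrisk}). One small caution: your intermediate step ``$\EE_M = \pi\EE_P + (1-\pi)\EE_Q$'' is slightly loose, because the densities $p = \frac{dP}{dM}$ and $q = \frac{dQ}{dM}$ appearing in the integrand themselves depend on $\pi$ through $M$; the clean way to reach your (correct) affine formula is the identities $q\,dM = dQ$ and $p\,dM = dP$, which give $\EE_M[q\test{r=1}] = Q(\{r=1\})$ and $\EE_M[p\test{r=0}] = P(\{r=0\})$ directly. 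With that bookkeeping fixed, your argument is complete.
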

\begin{proof}
	By (\ref{eq:bayesrisk}) we have that 
	\[
	\minLL(\pi,P,Q) = \EE_M[L(r^*, \pi, p, q)].
	\]
	Observe that
	\begin{eqnarray*}
		L(r^*, \pi, p, q) 
		&=& (1-\pi)q\test{\eta \ge \thalf} + \pi p \test{\eta < \thalf} \\
		&=& \begin{cases}
			(1-\pi)q, & q(1-\pi) \le \pi p \\
			\pi p,    & q(1-\pi) > \pi p
		\end{cases} \\
		&=& \min\{(1-\pi)q, \pi p \}
	\end{eqnarray*}
	and so for any $p, q$ is the minimum of two linear functions and thus
	concave in $\pi$.
	The full Bayes risk is the expectation of these functions and thus simply a 	
	linear combination of concave functions and thus concave.
\end{proof}

The tightness of the bounds in the main result of the next section depend
on the following corollary of a result due to Torgersen \cite{torgersen1991cse}.
It asserts that any appropriate concave function can be viewed as the 0-1 risk 
curve for \emph{some} pair of distributions $P$ and $Q$.
A proof can be found in \cite[\S 6.3]{ReidWilliamson2009}.
\begin{corollary}
    Suppose $\Xcal$ has a connected component. 
    Let $\psi\colon [0,1]\rightarrow[0,1]$ be an arbitrary concave function
    such that  for all $\pi\in[0,1]$, $0\le\psi(\pi)\le \pi\wedge(1-\pi)$. Then
    there exists $P$ and $Q$ such that $\minLL(\pi,P,Q)=\psi(\pi)$ for all
    $\pi\in[0,1]$.
    \label{corollary:all-bayes-risk-curves-possible}
\end{corollary}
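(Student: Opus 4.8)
The plan is to reduce the statement to a one-dimensional representation problem for the likelihood ratio and to solve that problem via the Green's function of $-\psi''$. The starting point is the Bayes-risk formula underlying Lemma~\ref{lem:vandrisk}: for densities $p,q$ of $P,Q$ with respect to any common dominating measure $\lambda$ one has $\minLL(\pi,P,Q)=\int_\Xcal\min\{\pi p,(1-\pi)q\}\,d\lambda$, and the $Q$-singular part of $P$ contributes nothing here. Hence, writing $P=P_{\mathrm{ac}}+P_{\mathrm{s}}$ for the Lebesgue decomposition with respect to $Q$, $\rho:=dP_{\mathrm{ac}}/dQ$, and $\nu$ for the $Q$-law of $\rho$, we get $\minLL(\pi,P,Q)=\int_{[0,\infty)}\min\{\pi t,1-\pi\}\,d\nu(t)$, where $\nu$ is a probability measure on $[0,\infty)$ with $\int t\,d\nu(t)\le 1$ (the deficit being the singular mass $P_{\mathrm{s}}(\Xcal)$). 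Conversely any such $\nu$ can be realised on any $\Xcal$ with a connected component: take $Q$ to be a nonatomic probability measure on (a piece of) that component, pick a measurable $\rho$ whose $Q$-law is $\nu$, set $dP=\rho\,dQ$, and add a lump of mass $1-\int t\,d\nu$ on a disjoint nonatomic piece (so that $P\perp Q$ there); this realisation is precisely the content of the corollary of Torgersen \cite{torgersen1991cse} invoked here, with details in \cite[\S6.3]{ReidWilliamson2009}. So it suffices to manufacture, from the given concave $\psi$, a probability measure $\nu$ on $[0,\infty)$ of mean at most $1$ with $\int_{[0,\infty)}\min\{\pi t,1-\pi\}\,d\nu(t)=\psi(\pi)$ for all $\pi\in[0,1]$.

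To build $\nu$, I would first observe that the hypothesis $0\le\psi(\pi)\le\pi\wedge(1-\pi)$ together with concavity and $\psi(0)=\psi(1)=0$ forces the end-slopes $s_0:=\psi'_+(0)$ and $s_1:=\psi'_-(1)$ into $[0,1]$ and $[-1,0]$ respectively (the difference quotients $\psi(\pi)/\pi$ and $\psi(\pi)/(1-\pi)$ are monotone and $\le 1$); thus $\psi$ is $1$-Lipschitz and $\nu_0:=-\psi''$ is a finite nonnegative measure on the open interval $(0,1)$. The elementary Green's-function identity for $-\psi''=\nu_0$ with zero boundary values,
\[
\psi(\pi)=\int_0^1 G(\pi,u)\,d\nu_0(u),\qquad G(\pi,u):=\min\{\pi(1-u),(1-\pi)u\},
\]
holds because $\partial_\pi^2 G(\cdot,u)=-\delta_u$, so both sides have distributional second derivative $-\nu_0$ and vanish at $\pi\in\{0,1\}$. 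Now exploit the algebraic identity $\min\{\pi t,1-\pi\}=\frac{1}{u}\,G(\pi,u)$ under the substitution $t=(1-u)/u$, and let $\mu$ be the image of the measure $u\,d\nu_0(u)$ under $u\mapsto(1-u)/u$; then $\psi(\pi)=\int_{(0,\infty)}\min\{\pi t,1-\pi\}\,d\mu(t)$, and an integration by parts gives the two clean evaluations $\mu((0,\infty))=\int_0^1 u\,d\nu_0(u)=-s_1\le 1$ and $\int t\,d\mu(t)=\int_0^1(1-u)\,d\nu_0(u)=s_0\le 1$. Finally set $\nu:=\mu+(1+s_1)\,\delta_0$, the unit point mass at $0$ supplying the missing mass: this has total mass $-s_1+(1+s_1)=1$, mean $s_0\le 1$, and still represents $\psi$ since $\min\{\pi\cdot 0,1-\pi\}=0$. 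That $\nu$ is exactly what the first paragraph needed.

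I expect the only genuinely non-elementary step to be the realisation step in the first paragraph: passing from the abstract measure $\nu$ to actual distributions on a set whose sole assumed structure is "having a connected component" is where the topological hypothesis is really used, and I would simply cite the Torgersen-type result rather than reprove it (its proof builds the standard binary experiment on the connected component). Everything else is bookkeeping that the hypothesis $\psi\le\pi\wedge(1-\pi)$ makes come out exactly right: it is used solely to pin the end-slopes into $[-1,1]$, which is precisely what turns $\mu$ (via the added atom at $0$) into a probability measure of mean $\le 1$---any larger $\psi$ would break this, consistently with the fact that it could not be a 0-1 Bayes-risk curve at all. A minor care point is that $\psi$ need not be smooth: corners merely contribute atoms to $\nu_0$, hence to $\mu$ and $\nu$, which is harmless, while $1$-Lipschitzness rules out infinite end-slopes, so $\nu_0$ is a genuine finite measure on the open interval and the Green's-function identity needs no boundary terms.
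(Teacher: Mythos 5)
Your construction is correct, and it takes a genuinely different route from the paper, which offers no internal proof of Corollary~\ref{corollary:all-bayes-risk-curves-possible} at all: it defers entirely to Torgersen \cite{torgersen1991cse} via \cite[\S 6.3]{ReidWilliamson2009}. You instead rebuild the experiment from $\psi$ explicitly, and the key steps check out: the hypothesis $0\le\psi(\pi)\le\pi\wedge(1-\pi)$ does pin the end-slopes $s_0,s_1$ into $[0,1]$ and $[-1,0]$, so $\nu_0=-\psi''$ is a finite nonnegative measure on $(0,1)$; the Green's-function identity holds because both sides have distributional second derivative $-\nu_0$ and vanish (continuously, again thanks to $\psi\le\pi\wedge(1-\pi)$) at the endpoints; the pushforward of $u\,d\nu_0(u)$ under $u\mapsto(1-u)/u$ together with the evaluations $\mu((0,\infty))=-s_1$ and $\int t\,d\mu(t)=s_0$ is right (I verified these, e.g.\ on piecewise-linear $\psi$, where $\nu_0$ is a sum of atoms at the kinks); and adding the atom $(1+s_1)\delta_0$ produces a probability law $\nu$ on $[0,\infty)$ with mean $s_0\le 1$ that still represents $\psi$, the deficit $1-s_0$ being exactly the $Q$-singular mass that $P$ must carry. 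What your approach buys is an explicit recipe---the curvature of the risk curve determines the law of the likelihood ratio---so that, rather than quoting the Torgersen-type characterisation as a black box, you in effect reprove it for dichotomies by elementary means.

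One caution about wording: you say the realisation step ``is precisely the content of the corollary of Torgersen invoked here,'' and you propose to cite that result. Read literally this cites the statement being proven and would be circular. It is not circular in substance, because what you actually need is far weaker than the corollary: only that the nontrivial connected component of $\Xcal$ supports a nonatomic probability measure $Q$ (on a Polish or standard Borel space this yields a Borel isomorphism with $[0,1]$, hence a measurable $\rho$ with $Q$-law $\nu$), plus a $Q$-null set on which to place the singular mass $1-s_0$ (a single point not charged by $Q$ suffices; no second nonatomic piece is required). State that measure-theoretic fact explicitly as the sole use of the topological hypothesis and your proof is self-contained, which is more than the paper itself provides for this corollary.
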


\section{Main Result}
\label{section:main}
We will now show how viewing $f$-divergences in terms of their weighted
integral representation simplifies the problem of understanding the
relationship between different divergences and leads, amongst other things, to
an explicit formula for~(\ref{eq:vajda-tight-lower-bound}).

Fix a positive integer $n$. 
Consider a sequence $0<\pi_1 <\pi_2 < \cdots <\pi_n < 1$. Suppose we
``sampled'' the value of $V_\pi(P,Q)$ at these discrete values of $\pi$.
Since $\pi\mapsto V_{\pi}(P,Q)$ is concave, the piece-wise linear  
concave function passing through points 
\[
\{(\pi_i, V_{\pi_i}(P,Q))\}_{i=1}^n
\]
is guaranteed to be an upper bound on the
variational curve $(\pi, V_{\pi}(P,Q))_{\pi\in(0,1)}$.  
This therefore
gives a lower bound on the $f$-divergence given
by a weight function $\gamma$. 
This observation forms the basis of the theorem stated below.

\begin{theorem}
    \label{theorem:general-pinsker}
For a  positive integer $n$
consider a sequence $0<\pi_1<\pi_2 <\cdots <\pi_n< 1$. Let $\pi_0:=0$ and
$\pi_{n+1}:=1$ and for $i=0,\ldots,n+1$ let
\[
	\psi_i:=(1-\pi_i)\wedge\pi_i - V_{\pi_i}(P,Q)
\]
(observe that consequently
$\psi_0=\psi_{n+1}=0$). Let 
\begin{eqnarray}
    A_n &:=&\left\{\abold=(a_1,\ldots,a_n)\in\reals^n
    \colon \phantom{ \frac{\psi_{i+1}-\psi_i}{\pi_{i+1}-\pi_i}
    }\right.\label{eq:A-n}\\
   & & \left. \frac{\psi_{i+1}-\psi_i}{\pi_{i+1}-\pi_i} \le
   a_i \le\frac{\psi_i-\psi_{i-1}}{\pi_i-\pi_{i-1}},\ i=1,\ldots,n\right\}
   .\nonumber 
\end{eqnarray}
The set $A_n$ defines the allowable slopes of a piecewise linear function
majorizing $\pi\mapsto V_{\pi}(P,Q)$ at each of $\pi_1,\ldots,\pi_n$.
For $\abold=(a_1,\ldots,a_n)\in A_n$, let 
\begin{eqnarray}
\nsp\tilde{\pi}_i \nsp&:= &\nsp\frac{\psi_i\!-\!\psi_{i+1}\!+\!a_{i+1}
    \pi_{i+1}\!-\!a_i\pi_i}{a_{i+1}-a_i},\   i=0,\ldots,n,\\
\nsp j \nsp&:= &\nsp \{k\in\{1,\ldots,n\}: \tilde{\pi}_{k} <\textstyle\frac{1}{2}\le 
	\tilde{\pi}_{k+1}\}.  \label{eq:j-def}\\
\nsp \bar{\pi}_i\nsp &:=&\nsp\test{i<j}\tilde{\pi}_i +
    \test{i=j} \textstyle\frac{1}{2} + \test{j<i}\tilde{\pi}_{i-1},
      \label{eq:bar-pi}\\
\nsp \alpha_{\abold,i}\nsp &:= &\nsp \test{i\le
j}(1-a_i)+\test{i>{j}}(-1-a_{i-1}),\label{eq:alpha-abold}\\
\nsp \beta_{\abold,i}\nsp&:=&\nsp\test{i\!\le\!{j}}(\psi_i\!-\!a_i\pi_i)\!+\!
\test{i\!>\!{j}}(\psi_{i-1}\!\!-\!a_{i-1}\pi_{i-1})\label{eq:beta-abold}
\end{eqnarray}
for $i=0,\ldots,n+1$ and let $\gamma_f$ be the weight corresponding to  
$f$ given by
(\ref{eq:gamma-f-relationship}).

For arbitrary $\II_f$ and 
    for all distributions $P$ and $Q$ the following bound holds.
    If in addition $\mathcal{X}$ contains a connected component, it is tight.
    \begin{eqnarray}
	\nsp&  & \II_f(P,Q) \\
	\nsp&\ge & \min_{\abold\in A_n} \sum_{i=0}^{n}
	\int_{\bar{\pi}_i}^{\bar{\pi}_{i+1}} 
	(\alpha_{\abold,i}\pi +\beta_{\abold,i}) \gamma_f(\pi) d\pi 
	\label{eq:general-pinsker}\\
	\nsp&=& \min_{\abold\in A_n} \sum_{i=0}^{n} \left[
	    \left(\alpha_{\abold,i}\bar{\pi}_{i+1}+
	    \beta_{\abold,i}\right)\Gamma_f(\bar{\pi}_{i+1})
	    -\alpha_{\abold,i}\bar{\Gamma}_f(\bar{\pi}_{i+1}) \right.\nonumber\\
	\nsp&& \ \ \ \ \left. -\left(\alpha_{\abold,i}\bar{\pi}_{i} +
	    \beta_{\abold,i}\right)\Gamma_f(\bar{\pi}_i)
	+\alpha_{\abold,i}\bar{\Gamma}_f(\bar{\pi}_i)\right] ,
	\label{eq:general-pinsker-evaluated}
\end{eqnarray}
where $\Gamma_f(\pi):=\int^\pi \gamma_f(t) dt$ and 
$\bar{\Gamma}_f(\pi):=\int^\pi \Gamma_f(t) dt$.
\end{theorem}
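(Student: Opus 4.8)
The plan is to combine the integral representation of Theorem~\ref{thm:f-intrep} with the concavity of $\pi\mapsto V_\pi(P,Q)$ (Lemma~\ref{lem:concavity}, via Lemma~\ref{lem:vandrisk}) and the converse Corollary~\ref{corollary:all-bayes-risk-curves-possible}. Writing $v(\pi):=V_\pi(P,Q)$, we know $\II_f(P,Q)=\int_0^1 v(\pi)\gamma_f(\pi)\,d\pi$ where $\gamma_f\ge 0$ since $f$ is convex. Hence to lower-bound $\II_f$ it suffices to lower-bound $v$ pointwise by a nonnegative weight; but the only constraint we are given is that $v$ passes through the sampled points $(\pi_i,v_{\pi_i})$ and is concave with $v(0)=v(1)=0$. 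Equivalently, working with $\psi_i=(1-\pi_i)\wedge\pi_i-v_{\pi_i}$, the admissible concave curves are exactly those lying \emph{below} every chord and above the piecewise-linear interpolant through the sample points. The extremal (lowest) admissible concave function is therefore itself piecewise linear: on each interval $[\pi_i,\pi_{i+1}]$ it is a line of some slope, and the concavity constraint forces the slope sequence to be nonincreasing and to lie in the intervals defining $A_n$ in~\eqref{eq:A-n}. So the infimum of $\int_0^1 \tilde v(\pi)\gamma_f(\pi)\,d\pi$ over admissible $\tilde v$ is attained at one of these piecewise-linear curves, parametrised by $\abold\in A_n$; this reduces the problem to the finite-dimensional minimisation in~\eqref{eq:general-pinsker}.

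The second block of steps is bookkeeping: for a fixed slope vector $\abold$, reconstruct the piecewise-linear curve explicitly. Two consecutive lines, the one with slope $a_i$ through the $i$-th data point and the one with slope $a_{i+1}$ through the $(i+1)$-th, intersect at the abscissa $\tilde\pi_i$ written in the statement — this is just solving a $2\times 2$ linear system. The subtlety is that $V_\pi$ is built from $\minLL(\pi,P,Q)$, which has a kink structure tied to whether $\pi p\ge(1-\pi)q$, i.e. around $\pi=\tfrac12$; so the curve we actually integrate against is $(1-\pi)\wedge\pi$ minus a concave function, and one must split the domain at $\tfrac12$. This is exactly what the index $j$ in~\eqref{eq:j-def}, the nodes $\bar\pi_i$ in~\eqref{eq:bar-pi}, and the affine coefficients $\alpha_{\abold,i},\beta_{\abold,i}$ in~\eqref{eq:alpha-abold}--\eqref{eq:beta-abold} encode: before $\tfrac12$ the integrand is $(1-\pi)$ minus the appropriate affine piece (slope $a_i$, hence net slope $1-a_i$), after $\tfrac12$ it is $\pi$ minus the shifted piece (net slope $-1-a_{i-1}$), and $\bar\pi_j=\tfrac12$ is the splitting node. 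Substituting these into $\int_0^1 \tilde v\,\gamma_f$ and writing $\Gamma_f=\int\gamma_f$, $\bar\Gamma_f=\int\Gamma_f$, integration by parts on each subinterval $[\bar\pi_i,\bar\pi_{i+1}]$ of the linear integrand against $\gamma_f$ produces the closed form~\eqref{eq:general-pinsker-evaluated}.

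For the \emph{lower bound} direction, the argument is: any $P,Q$ give a concave $v$ through the sample points, the chord/interpolant sandwich shows $v$ dominates one of the $\abold$-curves pointwise wherever it matters (more precisely, $v$ is $\ge$ the pointwise infimum of admissible curves, and since $\gamma_f\ge0$ the integral is monotone), so $\II_f(P,Q)\ge\min_{\abold\in A_n}(\cdots)$. For \emph{tightness} under the connectedness hypothesis, take the minimising $\abold^\*$, form the corresponding piecewise-linear concave function $\psi^\*(\pi)=(1-\pi)\wedge\pi-\tilde v_{\abold^\*}(\pi)$; one checks $0\le\psi^\*(\pi)\le\pi\wedge(1-\pi)$ and $\psi^\*$ concave, invoke Corollary~\ref{corollary:all-bayes-risk-curves-possible} to realise it as $\minLL(\pi,P,Q)$ for some $P,Q$, and then $\II_f(P,Q)$ equals the bound by construction.

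The main obstacle I expect is not the integral representation or integration by parts — those are mechanical — but verifying carefully that the infimum over admissible concave curves is (a) attained and (b) attained at a \emph{piecewise-linear} curve with breakpoints only at the $\pi_i$, and that the slope box $A_n$ is exactly the right feasibility set: one must rule out curves with extra kinks, handle the boundary pieces near $\pi_0=0$ and $\pi_{n+1}=1$ where $\psi_0=\psi_{n+1}=0$ pins the endpoints, and confirm that $A_n$ is nonempty and that for every $\abold\in A_n$ the intersection abscissae $\tilde\pi_i$ are ordered and the resulting function is genuinely concave and nonnegative. The second delicate point is the interaction of this optimisation with the split at $\pi=\tfrac12$ coming from the $(1-\pi)\wedge\pi$ term: one has to make sure the index $j$ is well-defined (some $\tilde\pi_k<\tfrac12\le\tilde\pi_{k+1}$ always exists) and that the piece straddling $\tfrac12$ is handled consistently on both sides, which is the content of the slightly awkward definitions~\eqref{eq:bar-pi}--\eqref{eq:beta-abold}. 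Once those combinatorial facts are nailed down, the rest is a direct computation.
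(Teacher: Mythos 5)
Your outline follows the same route as the paper's proof (integral representation, rewriting $V_\pi=\pi\wedge(1-\pi)-\minLL(\pi,P,Q)$, reduction to a finite-dimensional minimisation over slope vectors in $A_n$, then bookkeeping and Corollary~\ref{corollary:all-bayes-risk-curves-possible} for tightness), but the central reduction --- that \emph{every} admissible $(P,Q)$ satisfies $\II_f(P,Q)\ge\min_{\abold\in A_n}\sum_i\int(\alpha_{\abold,i}\pi+\beta_{\abold,i})\gamma_f\,d\pi$ --- is exactly the step you leave unproved, and the justification you sketch would not deliver it. Observing that $v$ dominates ``the pointwise infimum of admissible curves'' gives an integral lower bound that is in general strictly smaller than the right-hand side of \eqref{eq:general-pinsker} and is not identified with it. What is needed is a per-instance majorisation: for the given $(P,Q)$ choose a supporting line (sup-differential element) of the concave $\psi=\minLL(\cdot,P,Q)$ at each $\pi_i$; because a supporting line lies above the whole graph, in particular above $(\pi_{i-1},\psi_{i-1})$ and $(\pi_{i+1},\psi_{i+1})$, its slope automatically lies in the box \eqref{eq:A-n}; the pointwise minimum of these $n$ lines, clipped by $\pi\wedge(1-\pi)$, majorises $\psi$ everywhere, hence minorises $V_\pi$ everywhere, and since $\gamma_f\ge 0$ this yields $\II_f(P,Q)\ge\int\phi_{\abold}\gamma_f\ge\min_{\abold\in A_n}(\cdots)$. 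You explicitly list this verification as ``the main obstacle I expect'', so the key idea is missing rather than merely compressed.

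Two of the assertions you offer in its place are false and would derail an attempt to close the gap along your lines. First, $\pi\mapsto V_\pi(P,Q)$ is \emph{not} concave: Lemma~\ref{lem:concavity} concerns $\minLL$, and $V_\pi=\pi\wedge(1-\pi)-\minLL(\pi,P,Q)$ is a difference of concave functions (e.g.\ for $P=(1,0)$, $Q=(\thalf,\thalf)$ on a two-point space it has a convex kink at $\pi=\frac{1}{3}$); likewise a concave $\psi$ lies \emph{above} its chords and below its supporting lines, not ``below every chord''. Second, the extremal curve does not have breakpoints only at the $\pi_i$: a concave function that is linear on each $[\pi_i,\pi_{i+1}]$ and interpolates all the data is forced to be the chord interpolant, leaving no free slopes at all (and it is the pointwise-smallest admissible $\psi$, i.e.\ the wrong extreme). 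The correct extremals are the clipped minima of lines through the single points $(\pi_i,\psi_i)$ with slopes $a_i\in A_n$, whose kinks sit at the $\tilde\pi_i$ (generically strictly between sample points) plus possibly $\thalf$; your later ``bookkeeping'' paragraph uses this correct picture, contradicting the earlier description. That bookkeeping, the integration by parts giving \eqref{eq:general-pinsker-evaluated}, and the tightness argument via Corollary~\ref{corollary:all-bayes-risk-curves-possible} (where one should also check $\psi_{\abold^*}(\pi_i)=\psi_i$, so the realising pair has the prescribed values $V_{\pi_i}$) do match the paper.
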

Equation \ref{eq:general-pinsker-evaluated} follows from
(\ref{eq:general-pinsker}) by integration by parts. The remainder of the 
proof is in Section~\ref{section:pinsker-proofs}. 
Although (\ref{eq:general-pinsker-evaluated}) looks daunting, we observe: 
(1) the
constraints on $\abold$ are convex (in fact they are a box constraint); and (2)
the objective is a relatively benign function of $\abold$.  

When $n=1$ the result simplifies considerably. If in addition $\pi_1=\thalf$  
then by (\ref{eq:V-Vpi}) we have $V_{\frac{1}{2}}(P,Q)=\frac{1}{4}V(P,Q)$. 
It is then a straightforward exercise to explicitly evaluate
(\ref{eq:general-pinsker}), especially when $\gamma_f$ is symmetric. The
following theorem expresses the result in terms of $V(P,Q)$ for comparability
with previous results. The result for $\mathrm{KL}(P,Q)$ is a
(best-possible) improvement on the classical Pinsker
inequality. 
\begin{theorem}\label{theorem:special-cases-pinsker}
    For any distributions $P,Q$ on $\mathcal{X}$, let $V:=V(P,Q)$. 
    Then the following bounds hold and, if in 
    addition $\mathcal{X}$ has a connected component, are tight.

    When $\gamma$ is symmetric about $\thalf$ and convex,
    \begin{equation}
	\II_f(P,Q) \ge 2 \left[ 
	\bar{\Gamma}_f\left(\textstyle\frac{1}{2}-\textstyle\frac{V}{4}\right)+
	\textstyle\frac{V}{4}\Gamma_f\left(\textstyle\frac{1}{2}\right)-
	\bar{\Gamma}_f\left(\textstyle\frac{1}{2}\right)\right]
	\label{eq:general-symmetric-pinsker}
    \end{equation}
    and $\Gamma_f$ and $\bar{\Gamma}_f$ are as in 
    Theorem~\ref{theorem:general-pinsker}.
\end{theorem}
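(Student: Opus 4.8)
The plan is to derive Theorem~\ref{theorem:special-cases-pinsker} as a direct specialisation of Theorem~\ref{theorem:general-pinsker} to the case $n=1$, $\pi_1=\thalf$, together with the extra hypothesis that $\gamma_f$ is symmetric about $\thalf$ and convex. First I would set up the notation: with $\pi_0=0$, $\pi_1=\thalf$, $\pi_2=1$, and $\psi_0=\psi_2=0$, we have $\psi_1 = \thalf - V_{1/2}(P,Q) = \thalf - \tfrac{V}{4}$ by~(\ref{eq:V-Vpi}). The set $A_1$ from~(\ref{eq:A-n}) reduces to the interval of slopes $a_1$ with $\tfrac{\psi_2-\psi_1}{\pi_2-\pi_1}\le a_1\le\tfrac{\psi_1-\psi_0}{\pi_1-\pi_0}$, i.e.\ $-\tfrac{V}{2}\le a_1\le \tfrac{V}{2}$ after substituting the $\psi$ values. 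So the minimisation in~(\ref{eq:general-pinsker}) is over a single scalar $a_1\in[-\tfrac{V}{2},\tfrac{V}{2}]$.

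Next I would argue that the symmetry of $\gamma_f$ forces the optimum to be at $a_1=0$. The piecewise-linear majorant of $\pi\mapsto V_\pi(P,Q)$ through the single point $(\thalf,\thalf-\tfrac V4)$ with slope $a_1$ is a ``tent'' with apex there; the induced lower bound on $\II_f$ is the integral of the corresponding affine-in-$\pi$ Bayes-risk-type expression against $\gamma_f$. When $\gamma_f$ is symmetric about $\thalf$, replacing $a_1$ by $-a_1$ reflects the integrand about $\thalf$ and leaves the value of the objective unchanged; convexity of $\gamma_f$ (equivalently, a convexity/Jensen-type argument on the objective as a function of $a_1$, using that the breakpoints $\tilde\pi_0,\tilde\pi_1$ move monotonically with $a_1$) then shows $a_1\mapsto$ objective is convex, so its minimum over the symmetric interval is attained at the symmetric point $a_1=0$. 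I would spell out that at $a_1=0$ we get $j=1$ from~(\ref{eq:j-def}) (since the apex is already at $\thalf$), hence $\bar\pi_0=\tilde\pi_0$, $\bar\pi_1=\thalf$, $\bar\pi_2=\tilde\pi_1$, and the $\tilde\pi$'s degenerate: with $a_0=1$, $a_1=0$, $a_2=-1$ (from~(\ref{eq:alpha-abold}), the relevant slopes of the $\pi\wedge(1-\pi)$ envelope pieces) one computes $\tilde\pi_0=\thalf-\tfrac V4$ and $\tilde\pi_1=\thalf+\tfrac V4$. Then $\alpha_{\abold,i},\beta_{\abold,i}$ from~(\ref{eq:alpha-abold})--(\ref{eq:beta-abold}) become, for $i=1$: $\alpha=1,\ \beta=\psi_1 = \thalf-\tfrac V4$ giving the affine function $\pi + (\thalf-\tfrac V4)$ on... wait—more carefully, on $[\bar\pi_0,\bar\pi_1]=[\thalf-\tfrac V4,\thalf]$ the integrand is $\alpha_{\abold,0}\pi+\beta_{\abold,0}$, and on $[\thalf,\thalf+\tfrac V4]$ it is $\alpha_{\abold,1}\pi+\beta_{\abold,1}$, and by the symmetry these two contributions are equal, accounting for the overall factor $2$ in~(\ref{eq:general-symmetric-pinsker}).

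Finally I would evaluate the single remaining integral $\int_{1/2-V/4}^{1/2}(\alpha\pi+\beta)\gamma_f(\pi)\,d\pi$ using integration by parts exactly as in the passage from~(\ref{eq:general-pinsker}) to~(\ref{eq:general-pinsker-evaluated}), i.e.\ writing it via $\Gamma_f$ and $\bar\Gamma_f$; with $\alpha=1$ and the appropriate $\beta$ (chosen so the affine function vanishes or matches at $\thalf$), the boundary terms collapse to $\bar\Gamma_f(\thalf-\tfrac V4)+\tfrac V4\Gamma_f(\thalf)-\bar\Gamma_f(\thalf)$, and the factor $2$ from the two symmetric halves yields~(\ref{eq:general-symmetric-pinsker}). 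Tightness is inherited verbatim from Theorem~\ref{theorem:general-pinsker} (which in turn uses Corollary~\ref{corollary:all-bayes-risk-curves-possible}) under the connected-component hypothesis, so nothing new is needed there.

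The main obstacle I anticipate is the reduction step ``the minimiser is $a_1=0$'': one must be careful that the objective in~(\ref{eq:general-pinsker}), as a function of $a_1$, is genuinely convex (or at least symmetric-unimodal) on $[-\tfrac V4\cdot 2,\tfrac V4\cdot 2]$ — this is where the convexity hypothesis on $\gamma_f$ is used, and it requires tracking how the index $j$ and the breakpoints $\tilde\pi_i$ vary with $a_1$ (there is a regime where the apex $\tilde\pi$ crosses $\thalf$, changing which formula in~(\ref{eq:bar-pi})--(\ref{eq:beta-abold}) is active). Once that monotonicity/convexity bookkeeping is in place, the rest is the routine integration-by-parts computation already indicated in the statement of Theorem~\ref{theorem:general-pinsker}.
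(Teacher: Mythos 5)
Your proposal follows essentially the same route as the paper's proof: specialise to $n=1$ with $\pi_1=\thalf$ and $\psi_1=\thalf-\frac{V}{4}$, invoke the symmetry and convexity of $\gamma_f$ to place the optimal slope at $a=0$ (the paper likewise simply asserts this step at about the same level of detail), and then evaluate $2\int_{\psi_1}^{1/2}(\pi-\psi_1)\gamma_f(\pi)\,d\pi$ by parts to get \eqref{eq:general-symmetric-pinsker}, with tightness inherited from Theorem~\ref{theorem:general-pinsker}. One small slip: the feasible slope interval is $[-2\psi_1,2\psi_1]=\left[\frac{V-2}{2},\frac{2-V}{2}\right]$, not $\left[-\frac{V}{2},\frac{V}{2}\right]$; this is harmless here because the interval is still symmetric about $0$ and contains it, which is all your symmetry--convexity argument needs.
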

This theorem gives the first explicit representation of 
the optimal Pinsker bound.\footnote{
     A summary of existing results and their relationship to those presented
     here is given in appendix \ref{section:history}.
}
By plotting both (\ref{eq:implicit-pinsker}) and
(\ref{eq:my-pinsker-KL}) one can confirm 
that  the two bounds (implicit and explicit) coincide; see Figure
\ref{figure:pinsker-curves}.
\begin{figure}[t]
    \begin{center}
	\includegraphics[width=0.5\textwidth]{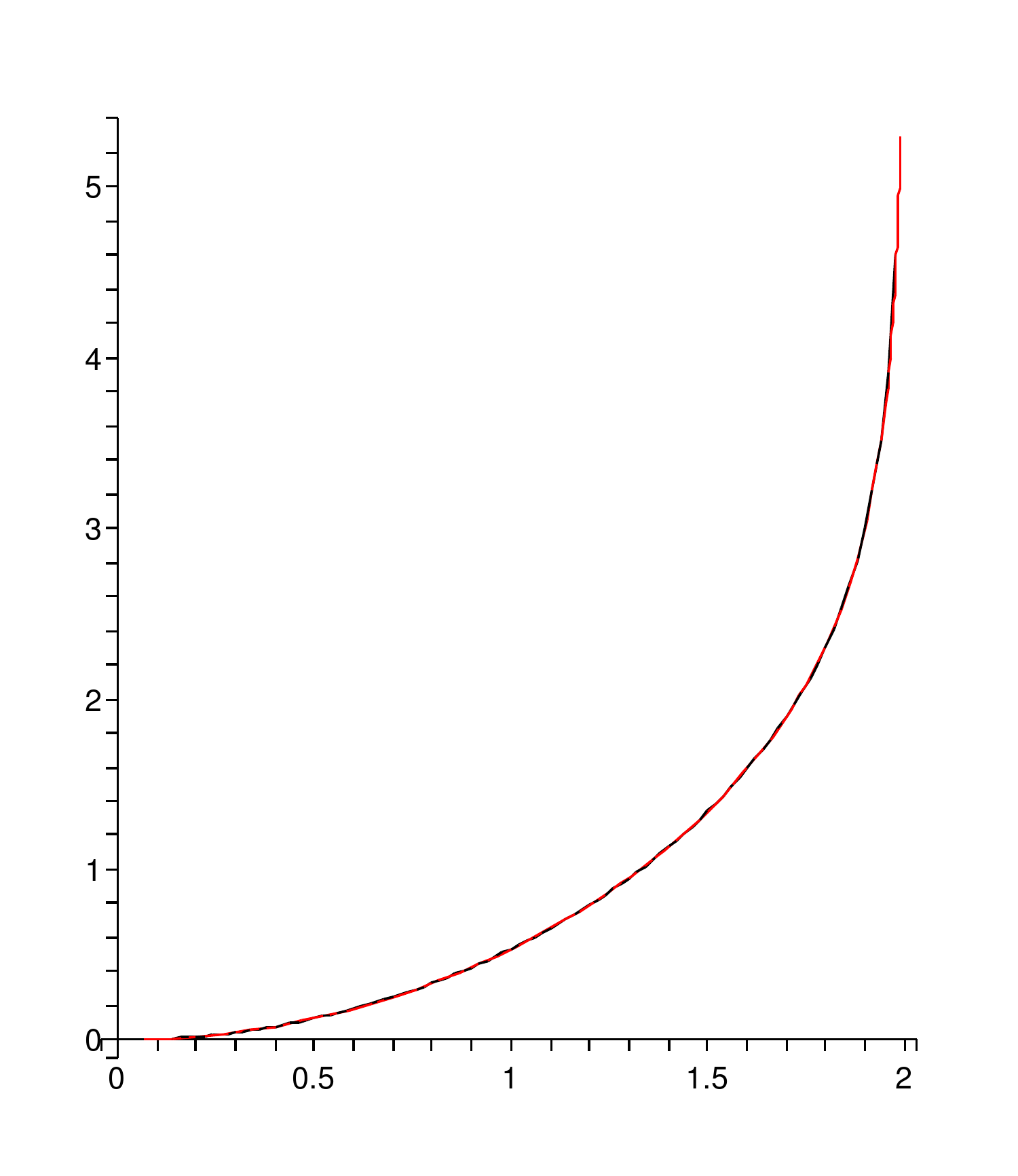}
    \end{center}
\caption{Lower bound on  $\mathrm{KL}(P,Q)$ 
as a function of the variational divergence $V(P,Q)$. Both the explicit bound
(\ref{eq:my-pinsker-KL}) and Fedotorev et al.'s implicit bound
(\ref{eq:implicit-pinsker}) are plotted.
\label{figure:pinsker-curves}}
\end{figure}

\section{Proof of Main Result}
\label{section:pinsker-proofs}
\begin{proof} {\bf (Theorem \ref{theorem:general-pinsker})}
	This proof is driven by the duality between the family of variational
	divergences $V_{\pi}(P,Q)$ and the 0-1 Bayes risk $\minLL(\pi,P,Q)$
	given in Lemma~\ref{lem:vandrisk}.
    Given distributions $P$ and $Q$ let 
    \begin{equation*}
		\phi(\pi)
		= V_{\pi}(P,Q)
		= \pi\wedge(1-\pi)-\psi(\pi),
    \end{equation*}
    where $\psi(\pi)=\minLL(\pi,P,Q)$. We
    know that $\psi$ is non-negative and  concave and satisfies 
    $\psi(\pi)\le\pi\wedge(1-\pi)$
    and thus $\psi(0)=\psi(1)=0$.

    Since 
    \begin{equation}
		\II_f(P,Q) = \int_0^1 \phi(\pi) \gamma_f(\pi) d\pi,
		\label{eq:I-f-gamma}
    \end{equation}
    $\II_f(P,Q)$ is minimised by minimising $\phi$ over all
    $(P,Q)$ such that
    \[
	\phi(\pi_i)=\phi_i=\pi_i\wedge(1-\pi_i)-\psi(\pi_i).
    \]
    Since 
    $\psi_i:= (1-\pi_i)\wedge\pi_i-V_{\pi_i}(P,Q) = \psi(\pi_i)$
    the minimisation problem for $\phi$ can be expressed in terms of
    $\psi$ as: 
    \begin{eqnarray}
	\nsp\nsp\mbox{Given\ } (\pi_i,\psi_i)_{i=1}^n\nsp
	& & \nsp\mbox{find the maximal\ }
    \psi\!\colon\![0,1]\rightarrow[0,\thalf]
    \label{eq:S-problem}\\
     \mbox{such that\ }& &\nsp\nsp \psi(\pi_i) = \psi_i, \ \ 
           i=0,\ldots,n+1,\label{eq:psi-i-constraint}\\
	& &\nsp\nsp  \psi(\pi) \le \pi\wedge (1-\pi), \ \
	\pi\in[0,1],\label{eq:psi-overbound-constraint}\\
	& & \nsp\nsp\psi  \mbox{\ is concave}.\label{eq:psi-concave-constraint}
    \end{eqnarray}
    This will tell us the optimal $\phi$ to use since optimising over $\psi$ is
    equivalent to optimising over $\minLL(\cdot,P,Q)$. Under the additional
    assumption on $\mathcal{X}$, Corollary
    \ref{corollary:all-bayes-risk-curves-possible} implies that for any $\psi$
    satisfying (\ref{eq:psi-i-constraint}),
    (\ref{eq:psi-overbound-constraint})  and (\ref{eq:psi-concave-constraint})
    there exists $P,Q$ such that $\minLL(\cdot,P,Q)=\psi(\cdot)$.
	This establishes the tightness of our bounds.

    Let $\Psi$ be the set of piece-wise linear concave functions on $[0,1]$
    having $n+1$ segments such that
    $\psi\in\Psi\Rightarrow\psi$ satisfies 
    (\ref{eq:psi-i-constraint}) and (\ref{eq:psi-overbound-constraint}). 
    We now show
    that in order to solve (\ref{eq:S-problem}) it suffices to consider
    $\psi\in\Psi$.

    \begin{figure*}[t]
	\begin{center}
	\includegraphics[width=0.98\textwidth]{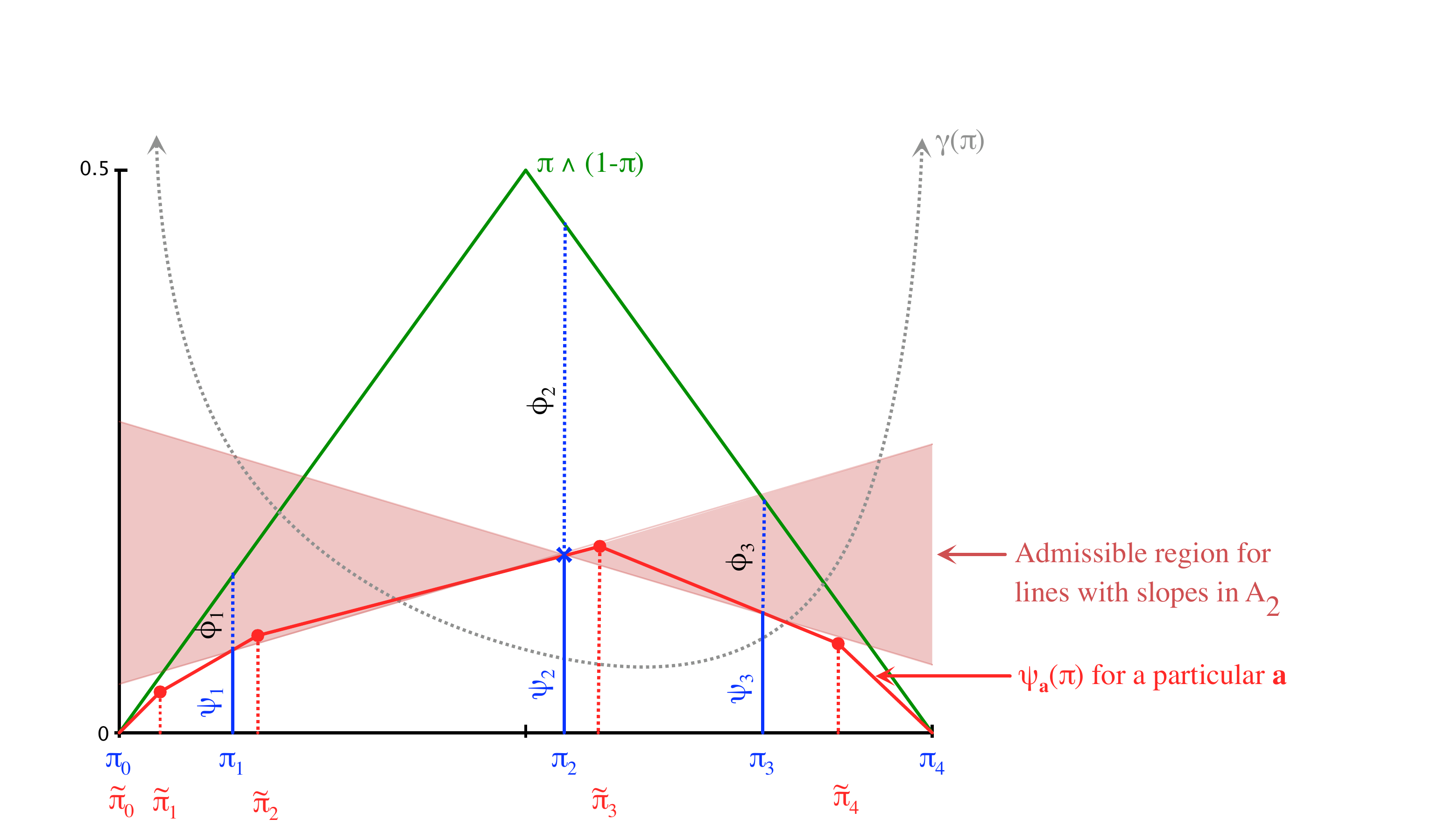}
    \end{center}
	\caption{Illustration of construction of optimal
	$\psi(\pi)=\minLL(\pi,P,Q)$ in the proof of 
	Theorem~\ref{theorem:general-pinsker}. 
	The optimal $\psi$ is piece-wise linear
	such that $\psi(\pi_i)=\psi_i$, $i=0,\ldots,n+1$. \label{figure:bounds}}
    \end{figure*}

    If $g$ is a concave function on $\reals$, then let
    \[
    \eth g(x):= \{s\in\reals\colon g(y) \le g(x) + \langle s, y-x\rangle, \
    y\in\reals\}
    \]
    denote the {\em sup-differential} of $g$ at $x$. (This is 
    the obvious analogue of the \emph{sub}-differential for convex functions
    \citep{Rockafellar:1970}.) 
    Suppose $\tilde{\psi}$ is a general concave function satisfying
    (\ref{eq:psi-i-constraint}) and (\ref{eq:psi-overbound-constraint}).
    For $i=1,\ldots,n$, let
    \begin{eqnarray*}
    & &\nsp\nsp G_i^{\tilde{\psi}}:=\left\{
    [0,1]\ni g_i^{\tilde{\psi}}:\pi_i\mapsto \psi_i\in\reals  \ \ 
    \mbox{is linear and}\right. \\
    & & \ \ \ \ \ \ \  
    \textstyle\left.\left.\frac{\partial}{\partial\pi}g_i^{\tilde{\psi}}(\pi)
    \right|_{\pi=\pi_i} \in\eth\tilde{\psi}(\pi_i)\right\}.
    \end{eqnarray*}
    Observe that by concavity, for all concave $\tilde{\psi}$ satisfying 
    (\ref{eq:psi-i-constraint}) and (\ref{eq:psi-overbound-constraint}), for
    all $g\in\bigcup_{i=1}^n G_i^{\tilde{\psi}}$, $g(\pi)\ge\psi(\pi)$,
    for $\pi\in[0,1]$. 
    
    Thus given any such $\tilde{\psi}$, one can always
    construct
    \begin{equation}
	\psi^*(\pi)=\min(g_1^{\tilde{\psi}}
	(\pi),\ldots,g_n^{\tilde{\psi}}(\pi))
	\label{eq:psi-general}
    \end{equation}
    such that $\psi^*$ is concave,  satisfies
    (\ref{eq:psi-i-constraint})   and
    $\psi^*(\pi)\ge\tilde{\psi}(\pi)$, for all $\pi\in[0,1]$. It remains to
    take account of (\ref{eq:psi-overbound-constraint}). That is trivially done
    by setting
    \begin{equation}
	 \psi(\pi) = \min (\psi^*(\pi),\pi\wedge(1-\pi))
	 \label{eq:psi-general-2}
    \end{equation}
    which remains concave and piecewise linear (although with potentially one
    additional linear segment). Finally, the pointwise smallest concave 
    $\psi$
    satisfying 
    (\ref{eq:psi-i-constraint}) and (\ref{eq:psi-overbound-constraint}) is the
    piecewise linear function connecting the points 
    $(0,0), (\pi_1,\psi_1),(\pi_2,\psi_2),\ldots, (\pi_m,\psi_m), (1,0)$.

    Let $g\colon [0,1]\rightarrow[0,\frac{1}{2}]$ be this function which can be written
    explicitly as
    \[
    g(\pi) =\left(\psi_i +\frac{(\psi_{i+1}-\psi)(\pi-\pi_i)}{\pi_{i+1}-\pi_i}
    \right)\cdot \test{\pi\in[\pi_i,\pi_{i+1}]},
    \]
    where we have defined $\pi_0:=0$, $\psi_0:=0$, $\pi_{n+1}:=1$ and
    $\psi_{n+1}:=0$. 
    
    We now explicitly parametrize  this family of functions.
    Let $p_i\colon[0,1]\rightarrow\reals$ denote the
    affine segment the graph of which passes through 
    $(\pi_i,\psi_i)$, $i=0,\ldots,n+1$.
    Write $p_i(\pi)=a_i\pi +b_i$. We know that $p_i(\pi_i)=\psi_i$ and thus
    \begin{equation}
	 b_i=\psi_i -a_i\pi_i, \ \ \ i=0,\ldots,n+1 .
    \end{equation}
    In order to determine the constraints on $a_i$, since $g$ is concave and
    minorizes $\psi$, it suffices to only consider $(\pi_{i-1}, g(\pi_{i-1}))$
    and $(\pi_{i+1},g(\pi_{i+1}))$ for $i=1,\ldots,n$.  We have (for
    $i=1,\ldots,n$)
    \begin{eqnarray}
	  & p_i(\pi_{i-1}) &\ge g(\pi_{i-1})\nonumber\\
	 \Rightarrow & a_i \pi_{i-1}+b_i &\ge \psi_{i-1}\nonumber\\
	 \Rightarrow & a_i \pi_{i-1}+\psi_i-a_i\pi_i &\ge \psi_{i-1}\nonumber\\
	 \Rightarrow & a_i \underbrace{(\pi_{i-1}-\pi_i)}_{< 0}& \ge \psi_{i-1}
	 -\psi_i\nonumber\\[-3mm]
	 \Rightarrow & a_i &\le \frac{\psi_{i-1}-\psi_i}{\pi_{i-1}-\pi_i} .
	 \label{eq:ai-upperbound}
    \end{eqnarray}
    Similarly we have (for $i=1,\ldots,n$)
    \begin{eqnarray}
	 & p_i(\pi_{i+1}) &\ge g(\pi_{i+1})\nonumber\\
	 \Rightarrow & a_i\pi_{i+1}+b_i &\ge\psi_{i+1}\nonumber\\
	 \Rightarrow & a_i\pi_{i+1}+\psi_i-a_i\pi_i &\ge \psi_{i+1}\nonumber\\
	 \Rightarrow & a_i\underbrace{(\pi_{i+1}-\pi_i)}_{>0} &\ge
	 \psi_{i+1}-\psi_i\nonumber\\[-3mm]
	 \Rightarrow & a_i & \ge
	 \frac{\psi_{i+1}-\psi_i}{\pi_{i+1}-\pi_i} .\label{eq:ai-lowerbound}
    \end{eqnarray}
    We now determine the points at which $\psi$ defined by
    (\ref{eq:psi-general}) and (\ref{eq:psi-general-2}) change slope. That
    occurs at the points $\pi$ when
    \begin{eqnarray*}
      & p_i(\pi) &=p_{i+1}(\pi)\\
      \Rightarrow & a_i\pi+\psi_i-a_i\pi_i
      &=a_{i+1}\pi+\psi_{i+1}-a_{i+1}\pi_{i+1}\\
      \Rightarrow & (a_{i+1}-a_i)\pi &=
      \psi_i-\psi_{i+1}+a_{i+1}\pi_{i+1}-a_i\pi_i\\
      \Rightarrow &\pi &=
      \frac{\psi_i-\psi_{i+1}+a_{i+1}\pi_{i+1}}{a_{i+1}-a_i} \\
      &  & =:\tilde{\pi}_i
\end{eqnarray*}
    for $i=0,\ldots,n$. Thus
    \[
    \psi(\pi)=p_i(\pi),\ \ \ \pi\in[\tilde{\pi}_{i-1},\tilde{\pi}_i], \
    i=1,\ldots,n.
    \]
    Let $\abold=(a_1,\ldots,a_n)$. We explicitly denote the dependence of
    $\psi$ on $\abold$ by writing $\psi_{\abold}$.
    Let
    \begin{eqnarray*}
    \phi_{\abold}(\pi)\nsp&:=&\nsp\pi\wedge(1-\pi) -\psi_{\abold}(\pi)\\
    &=&\nsp\alpha_{\abold,i}\pi+\beta_{\abold,i}, \ 
    \pi\in[\bar{\pi}_{i-1},\bar{\pi}_{i}],  i=1,\ldots,n+1,
    \end{eqnarray*}
    where $\abold\in A_n$ (see (\ref{eq:A-n})), $\bar{\pi}_i$, 
    $\alpha_{\abold,i}$ and
    $\beta_{\abold,i}$ are defined by (\ref{eq:bar-pi}), 
    (\ref{eq:alpha-abold}) and
    (\ref{eq:beta-abold}) respectively. The extra segment induced at index $j$
    (see (\ref{eq:j-def})) is needed since $\pi\mapsto\pi\wedge(1-\pi)$ has a
    slope change at $\pi=\thalf$. Thus in general, $\phi_{\abold}$ is
    piece-wise linear with $n+2$ segments (recall $i$ ranges from $0$ to $n+2$);
    if $\tilde{\pi}_{k+1}=\frac{1}{2}$ for some $k\in\{1,\ldots,n\}$, then there will
    be only $n+1$ non-trivial segments.

    Thus  
    \[
    \left\{\pi\mapsto\sum_{i=0}^{n} \phi_{\abold}(\pi)
    \cdot\test{\pi\in[\bar{\pi}_{i},
    \bar{\pi}_{i+1}]}\colon \abold\in A_n\right\}
    \]
    is the set of $\phi$ consistent with the constraints and $A_n$ is defined
    in (\ref{eq:A-n}).
    Thus substituting into (\ref{eq:I-f-gamma}), interchanging the order of
    summation and integration and optimizing we have
    shown~(\ref{eq:general-pinsker}). The tightness has already been argued: 
    under the additional assumption on $\mathcal{X}$, since there is no
    slop in the argument above since every $\psi$ satisfying the
    constraints in (\ref{eq:S-problem}) is the Bayes risk function for some 
	$(P,Q)$.
\end{proof}

\begin{proof} {\bf (Theorem \ref{theorem:special-cases-pinsker})}
    In this case $n=1$  and the optimal $\psi$ function will be piecewise
    linear, concave, and its graph will 
    pass through $(\pi_1,\psi_1)$. Thus the optimal $\phi$
    will be of the form
    \[
    \phi(\pi) = \left\{
    \begin{array}{ll}
	 0, & \pi\in [0,L]\cup [U,1]\\
	 \pi -(a\pi+b), \ & \pi\in [L,\thalf]\\
	 (1-\pi)-(a\pi+b), & \pi\in[\thalf,U].
    \end{array}
    \right.
    \]
    where $a\pi_1+b=\psi_1\Rightarrow b=\psi_1-a\pi_1$ and $a\in
    [-2\psi_1,2\psi_1]$ (see Figure \ref{figure:SimplePinsker}).
    \begin{figure}[t]
	\begin{center}
	\includegraphics[width=0.8\textwidth]{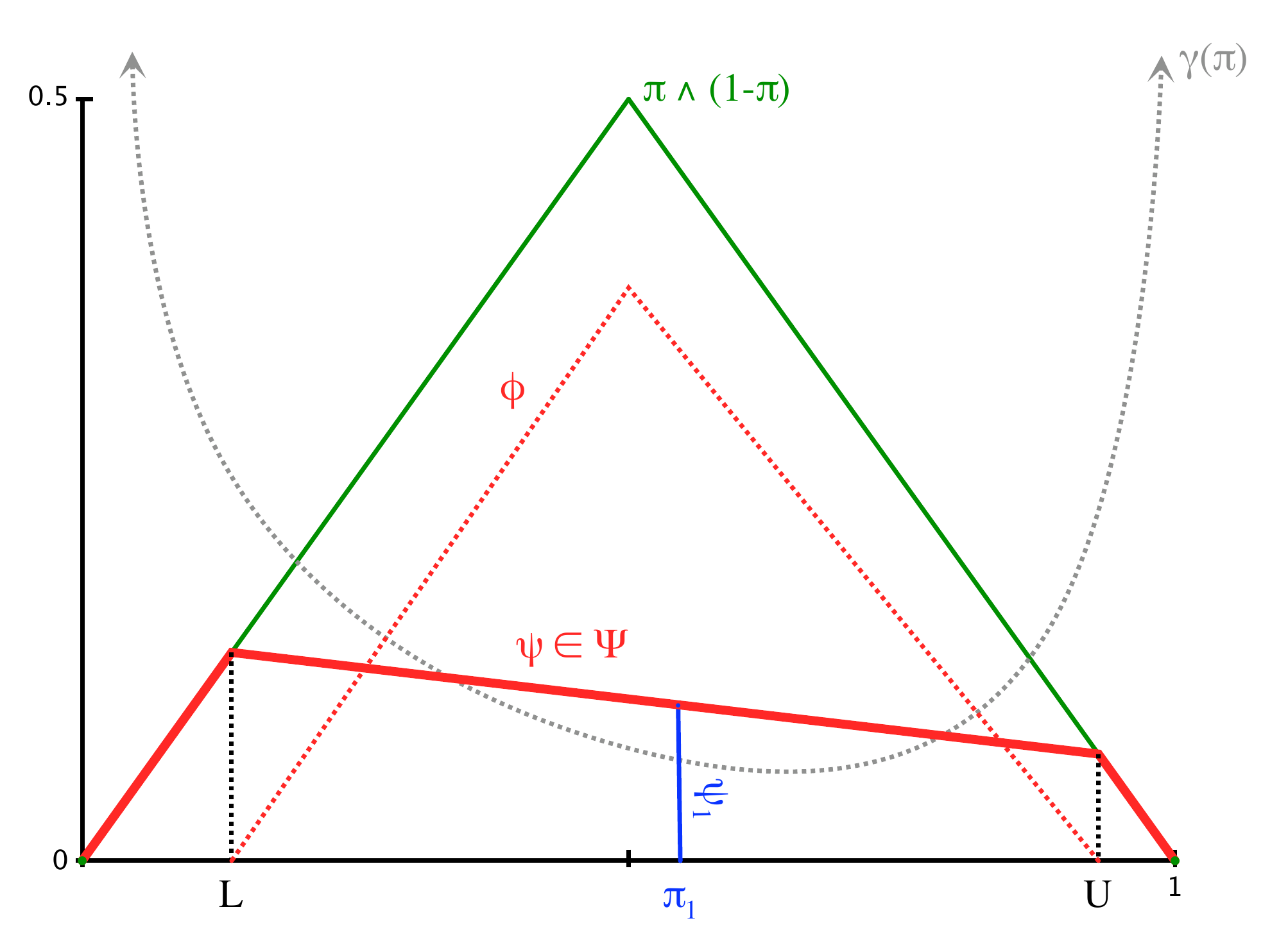}
	\end{center}
	\caption{The optimisation problem when $n=1$. Given $\psi_1$, there are
	many risk curves consistent with it. The optimisation problem involves
	finding the piece-wise linear concave risk curve $\psi\in\Psi$ and the 
	corresponding  $\phi = \pi\wedge(1-\pi) - \psi$ that 
	maximises $\II_f$. $L$ and $U$ are defined in the text.
	\label{figure:SimplePinsker}}
    \end{figure}
    For variational divergence, $\pi_1=\thalf$ and thus by (\ref{eq:V-Vpi})
    \begin{equation}
	\psi_1=\pi_1\wedge(1-\pi_1) - \frac{V}{4}=\frac{1}{2}-\frac{V}{4}
	\label{eq:psi-1-variational}
    \end{equation}
    and so $\phi_1=V/4$. We can thus determine $L$ and $U$: 
    \begin{eqnarray*}
	 & aL +b & = L\\
	 \Rightarrow & aL+\psi_1-a\pi_1& =L\\
	 \Rightarrow  & L & = \frac{a\pi_1 -\psi_1}{a-1}.
    \end{eqnarray*}
    Similarly $ aU+b =1-U \Rightarrow U =\frac{1-\psi_1+a\pi_1}{a+1}$
    and thus
    \begin{eqnarray}
	\nsp& &\nsp \II_f(P,Q) \ge\!\!\! \min_{a\in[-2\psi_1,2\psi_1]}\!\!\!
	\int\limits_{\frac{a\pi_1-\psi_1}{a-1}}^{\frac{1}{2}}\!\!\!\!\!
	[(1-a)\pi-\psi_1+a\pi_1]
	\gamma_f(\pi) d\pi \nonumber\\
	\nsp& &\nsp\ \ \ \ +\int\limits_{\frac{1}{2}}^{\frac{1-\psi_1+a\pi_1}{a+1}}\!\!\!\!\!
	[(-a-1)\pi-\psi_1+a\pi_1+1]\gamma_f(\pi)d\pi.
	\label{eq:general-I-f-in-terms-of-psi-1}
    \end{eqnarray}
If $\gamma_f$ is symmetric about $\pi=\thalf$ and convex and
$\pi_1=\thalf$, then the optimal $a=0$.  Thus in that case, 
\begin{eqnarray}
    \II_f(P,Q) &\ge& 2 \int_{\psi_1}^{\frac{1}{2}} 
      (\pi-\psi_1) \gamma_f(\pi) d\pi \\
      &=& 2\left[(\thalf-\psi_1)\Gamma_f(\thalf) + \bar{\Gamma}_f(\psi_1)
      -\bar{\Gamma}_f(\thalf)\right]\nonumber\\
      &=& 2 \left[\textstyle\frac{V}{4} \Gamma_f(\thalf)
      +\bar{\Gamma}_f \left(\thalf-\textstyle\frac{V}{4}\right)
      -\bar{\Gamma}_f(\thalf)\right] . \label{eq:simpler-pinsker-Gamma}
\end{eqnarray}
Combining the above with (\ref{eq:psi-1-variational}) leads to a range of
Pinsker style bounds for symmetric $\II_f$:

    \mypara{Jeffrey's Divergence} Since $J(P,Q)=\KL(P,Q)+\KL(Q,P)$ we have
	$\gamma(\pi)=\frac{1}{\pi^2(1-\pi)^2}$. (As a check,
	$f(t)=(t-1)\ln(t)$, $f''(t)=\frac{t+1}{t^2}$ and so
	$\gamma_f(\pi)=\frac{1}{\pi^3}f''\left(\frac{1-\pi}{\pi}\right)=
	\frac{1}{\pi^2 (1-\pi)^2}$.)
	Thus
	\begin{eqnarray*}
	    J(P,Q) &\ge & 2 \int_{\psi_1}^{1/2} \frac{(\pi-\psi_1)}{\pi^2
	    (1-\pi)^2} d\pi\\
	    &=& (4\psi_1 -2)(\ln(\psi_1)-\ln(1-\psi_1)).
	\end{eqnarray*}
	Substituting $\psi_1=\frac{1}{2}-\frac{V}{4}$ gives
	\[
	    J(P,Q) \ge 
	    V\ln\left(\frac{2+V}{2-V}\right).
	\]

	Observe that the above bound behaves like $V^2$ for small $V$, and 
	$V\ln\left(\frac{2+V}{2-V}\right)\ge V^2$ for $V\in[0,2]$.
	Using
	the traditional Pinkser inequality ($\KL(P,Q)\ge V^2/2$) we have
	\begin{eqnarray*}
	    J(P,Q)&=& \KL(P,Q)+\KL(Q,P)\\
	    &  \ge & \frac{V^2}{2}+\frac{V^2}{2}
	    = V^2.
	\end{eqnarray*}
\myparas{Jensen-Shannon Divergence} Here
	$f(t)=\frac{t}{2}\ln t - 
	     \frac{(t+1)}{2}\ln(t+1) +\ln 2$ 
	 and thus
	$\gamma_f(\pi)=\frac{1}{\pi^3}
	f''\left(\frac{1-\pi}{\pi}\right)=\frac{1}{2\pi(1-\pi)}$. Thus
	\begin{eqnarray*}
	    I(P,Q)\nsp &=&\nsp 2 \int_{\psi_1}^{\frac{1}{2}}
	    \frac{\pi-\psi_1}{2\pi(1-\pi)}d\pi\\
	    \nsp&=&\nsp \ln(1\!-\!\psi_1)-\psi_1\ln(1\!-\!\psi_1)+\psi_1\ln\psi_1+\ln(2).
	\end{eqnarray*}
	Substituting $\psi_1=\frac{1}{2}-\frac{V}{4}$ leads to
	\[
	I(P,Q) \ge \left(\textstyle\frac{1}{2}-\textstyle\frac{V}{4}\right)\ln(2-V) +
	\left(\textstyle\frac{1}{2}+\textstyle\frac{V}{4}\right)\ln(2+V) -\ln(2).
	\]
\myparas{Hellinger Divergence} Here $f(t)=(\sqrt{t}-1)^2$. Consequently
    $\gamma_f(\pi) = \frac{1}{\pi^3} f''\left(\frac{1-\pi}{\pi}\right)=
    \frac{1}{\pi^3} \frac{1}{2 \left( (1-\pi)/
    \pi\right)^{3/2}}=\frac{1}{2[\pi(1-\pi)]^{3/2}}$
    and thus
    \begin{eqnarray*}
	h^2(P,Q) &\ge & 2\int_{\psi_1}^{\frac{1}{2}}
	\frac{\pi-\psi_1}{2[\pi(1-\pi)]^{3/2}} d\pi\\
	&=& \frac{4 \sqrt{\psi_1} (\psi_1-1)
	+2\sqrt{1-\psi_1}}{\sqrt{1-\psi_1}}\\
	&=&
	\frac{4\sqrt{\frac{1}{2}-\frac{V}{4}}\left(\frac{1}{2}-\frac{V}{4}-1\right)+2\sqrt{1-\frac{1}{2}+\frac{V}{4}}}{\sqrt{1-\frac{1}{2}+\frac{V}{4}}}\\
	&=&2- \frac{(2+V)\sqrt{2-V}}{\sqrt{2+V}}\\
	&=& 2-\sqrt{4-V^2}.
    \end{eqnarray*}
    For small $V$, $2-\sqrt{4-V^2}\approx V^2/4$.
\mypara{Arithmetic-Geometric Mean Divergence} In this case, $f(t)=\frac{t+1}{2}
    \ln\left(\frac{t+1}{2\sqrt{t}}\right)$. Thus $f''(t)=\frac{t^2+1}{4t^2
    (t+1)}$ and hence $\gamma_f(\pi)=\frac{1}{\pi^3}
    f''\left(\frac{1-\pi}{\pi}\right) = 
    \gamma_f(\pi)=\frac{2\pi^2-2\pi+1}{\pi^2(\pi-1)^2}$
    and thus
    \begin{eqnarray*}
	T(P,Q) &\ge & 2\int_{\psi_1}^{\frac{1}{2}} (\pi-\psi_1)
    \frac{2\pi^2-2\pi+1}{\pi^2(\pi-1)^2} d\pi\\
    &=& -\frac{1}{2}\ln(1-\psi)-\frac{1}{2}\ln(\psi)-\ln(2).
    \end{eqnarray*}
    Substituting $\psi_1=\frac{1}{2}-\frac{V}{4}$ gives
    \begin{eqnarray*}
    T(P,Q) &\ge &  -\frac{1}{2}\ln\left(\frac{1}{2}+\frac{V}{4}\right)
    -\frac{1}{2}\ln\left(\frac{1}{2}-\frac{V}{4}\right)-\ln(2)\\
    & = & \ln\left(\frac{4}{\sqrt{4-V^2}}\right)-\ln(2) .
\end{eqnarray*}
\mypara{Symmetric $\chi^2$-Divergence} In this case $\Psi(P,Q)=\chi^2(P,Q)+\chi^2(Q,P)$
    and thus (see below) $\gamma_f(\pi)=\frac{2}{\pi^3}+\frac{2}{(1-\pi)^3}$.
    (As a check, from $f(t)=\frac{(t-1)^2(t+1)}{t}$ we have
    $f''(t)=\frac{2(t^3+1)}{t^3}$ and thus
    $\gamma_f(\pi)=\frac{1}{\pi^3}f''\left(\frac{1-\pi}{\pi}\right)$ gives the
    same result.)
    \begin{eqnarray*}
	\Psi(P,Q) &\ge& 2\int_{\psi_1}^{\frac{1}{2}}
	(\pi-\psi_1)\left(\frac{2}{\pi^3}+\frac{2}{(1-\pi)^3}\right) d\pi\\
	&=& \frac{2(1+4\psi_1^2-4\psi_1)}{\psi_1(\psi_1-1)}.
    \end{eqnarray*}
    Substituting $\psi_1=\frac{1}{2}-\frac{V}{4}$ gives $\Psi(P,Q)\ge
    \frac{8V^2}{4-V^2}$.

When $\gamma_f$ is not symmetric, one needs to use
(\ref{eq:general-I-f-in-terms-of-psi-1}) instead of the simpler
(\ref{eq:simpler-pinsker-Gamma}).  We consider two  cases.
\mypara{$\chi^2$-Divergence} Here $f(t)=(t-1)^2$ and so $f''(t)=2$ and hence
    $\gamma(\pi)=f''\left(\frac{1-\pi}{\pi}\right)/\pi^3 = \frac{2}{\pi^3}$ which is not
    symmetric.
	Upon substituting $2/\pi^3$ for $\gamma(\pi)$ in
	(\ref{eq:general-I-f-in-terms-of-psi-1}) and evaluating the integrals
	we obtain
	\[
	\chi^2(P,Q) \ge 2 \min_{a\in[-2\psi_1,2\psi_1]} 
	\underbrace{\textstyle\frac{1+4\psi_1^2-4\psi_1}{2\psi_1-a} -
	\frac{1+4\psi_1^2-4\psi_1}{2\psi_1-a-2}}_{=:J(a,\psi_1)} .
	\]
	One can then solve $\frac{\partial}{\partial a} J(a,\psi_1)=0$ for $a$
	and one obtains $a^*=2\psi_1-1$. Now $a^*> -2\psi_1$ only if
	$\psi_1>\frac{1}{4}$. One can check that when $\psi_1\le\frac{1}{4}$,
	then $a\mapsto J(a,\psi_1)$ is monotonically increasing for
	$a\in[-2\psi_1,2\psi_1]$ and hence the minimum occurs at
	$a^{*}=-2\psi_1$. Thus the value of $a$ minimising
	$J(a,\psi_1)$ is
	\[
	a^*= \test{\psi_1>1/4} (2\psi_1-1) + \test{\psi_1\le 1/4}(-2\psi_1).
	\]
	Substituting the optimal value of $a^*$ into $J(a,\psi_1)$ we obtain
	\begin{eqnarray*}
	    J(a^*,\psi_1)&=& \test{\psi_1>1/4}(2+8\psi_1^2-8\psi_1)\\
	    \nsp& &\nsp\nsp\nsp\nsp + 
	    \test{\psi_1\le 1/4}\left(\frac{1+4\psi_1^2-4\psi}{4\psi} -
	    \frac{1+4\psi_1^2-4\psi}{4\psi_1-2}\right) .
	\end{eqnarray*}
	Substituting $\psi_1=\frac{1}{2}-\frac{V}{4}$ and observing that $V<1
	\Rightarrow \psi_1>1/4$ we obtain
	\[
	\chi^2(P,Q) \ge \test{V<1} V^2 + \test{V\ge 1}\frac{V}{(2-V)}.
	\]
	Observe that the bound diverges to $\infty$ as $V\rightarrow 2$.

\mypara{Kullback-Leibler Divergence} In this case 
    we have $f(t)=t\ln t$ and thus $f''(t)=1/t$ and consequently 
    $\gamma_f(\pi)=\frac{1}{\pi^3}f''\left(\frac{1-\pi}{\pi}\right)=
    \frac{1}{\pi^2(1-\pi)}$ which is clearly not symmetric. From  
	(\ref{eq:general-I-f-in-terms-of-psi-1}) we obtain
	\begin{eqnarray*}
	\nsp && \nsp \KL(P,Q) \ge \min_{[-2\psi_1,2\psi_1]}
	\left(1-\textstyle\frac{a}{2}-\psi_1\right)
	\ln\left(\textstyle\frac{a+2\psi_1-2}{a-2\psi_1}\right) \\
	 &&\ \ \ \ \ \ \ 
	+\left(\textstyle\frac{a}{2}+\psi_1\right)\ln\left(\textstyle\frac{a+2\psi_1}{a-2\psi_1+2}\right).
    \end{eqnarray*}
	Substituting $\psi_1=\frac{1}{2}-\frac{V}{4}$ gives
	\[
	    \KL(P,Q) \ge  \min_{a\in\left[\frac{V-2}{2},\frac{2-V}{2}\right]}
	    \delta_a (V),
	\]
	where 
	\[\delta_a(V)\!=\!\textstyle\left(\frac{V+2-2a}{4}\right)
	\ln\left(\frac{2a-2-V}{2a-2+V}\right)+
	\left(\frac{2a+2-V}{4}\right)\ln\left(\frac{2a+2-V}{2a+2+V}\right). 
	\]
	Set $\beta:=2a$ and we have (\ref{eq:my-pinsker-KL}).
\end{proof}

\section{Conclusion}
We have generalised the classical Pinsker inequality and developed best
possible bounds for the general situation. A special case of the result gives
an explicit bound relating Kullback-Liebler divergence and variational
divergence. The proof relied on an integral representation of $f$-divergences
in terms of statistical information. Such representations are a powerful
device as they identify the primitives underpinning general learning problems.
These representations are further studied in~\cite{ReidWilliamson2009}.

\appendix
\section{History of Pinsker Inequalities}
\label{section:history}
    Pinsker \cite{Pinsker1964} presented the first bound relating $\mathrm{KL}(P,Q)$ to
    $V(P,Q)$: $\mathrm{KL}\ge V^2/2$ and it is now known by his name  or 
    sometimes as the Pinsker-Csisz\'{a}r-Kullback
    inequality since Csiszar~\cite{Csiszar1967} presented another version
    and 
    Kullback \cite{Kullback1967} showed $\mathrm{KL}\ge V^2/2 +V^4/36$. Much later 
    Tops{\o}e \cite{Topsoe2001} showed $\mathrm{KL}\ge V^2/2 + V^4/36+V^6/270$.
    Non-polynomial bounds are due to Vajda~\cite{Vajda1970}: $\mathrm{KL}\ge
    L_{\mathrm{Vajda}}(V):=\log\left(\frac{2+V}{2-V}\right)-\frac{2V}{2+V}$ and
    Toussaint~\cite{Toussaint1978} who showed $\mathrm{KL}\ge L_{\mathrm{Vajda}}(V) \vee
    (V^2/2+V^4/36+V^8/288)$. 

    Care needs to be taken when comparing results from the literature as
    different definitions for the divergences exist. For example
    Gibbs and Su~\cite{GibbsSu2002} used a definition of $V$ that differs by a factor of 
    2 from ours.
    There are some isolated bounds relating $V$ to some other divergences,
    analogous to the classical Pinkser bound; Kumar~\cite{Kumar2005} has presented
    a summary as well as new bounds for a wide range of \emph{symmetric}
    $f$-divergences by making assumptions on the likelihood ratio: $r\le
    p(x)/q(x)\le R<\infty$ for all $x\in\mathcal{X}$. 
    This line of reasoning has also been
    developed by Dragomir et al.~\cite{DragomirGluvsvcevicPearce2001} and Taneja~\cite{Taneja2005, Taneja2005a}. 
    Tops{\o}e~\cite{Topsoe:2000}  has presented some infinite series  representations for
    capacitory discrimination in terms of triangular discrimination which lead
    to inequalities between those two divergences.
    Liese and Miescke~\cite[p.48]{Liese2008} give the inequality $V\le h\sqrt{4-h^2} $
    (which seems to be originally due to LeCam~\cite{LeCam1986})
    which when rearranged corresponds exactly to the bound for $h^2$ in theorem
    \ref{theorem:special-cases-pinsker}.
    Withers~\cite{Withers1999} has also presented some inequalities between other
    (particular) pairs of divergences; his reasoning is also in terms of
    infinite series expansions. 

    Arnold et al.~\cite{ArnoldMarkowichToscaniUnterreiter} considered the case of $n=1$
    but arbitrary $\II_f$ (that is they bound an arbitrary $f$-divergence in
    terms of the variational divergence). Their argument is similar to the
    geometric proof of Theorem \ref{theorem:general-pinsker}.
    They do not compute any of the explicit
    bounds in theorem \ref{theorem:special-cases-pinsker} except they state
    (page 243)  $\chi^2(P,Q)\ge V^2$ which is looser than 
    (\ref{eq:my-chi-squared-bound}).

    Gilardoni~\cite{Gilardoni:2006} showed (via an intricate argument) that if
    $f'''(1)$ exists, then $\II_f \ge \frac{f''(1) V^2}{2}$. He also showed some
    fourth order inequalities of the form $\II_f\ge c_{2,f} V^2 + c_{4,f} V^4$
    where the constants depend on the behaviour of $f$ at 1 in a complex way.
    Gilardoni~\cite{Gilardoni2006,Gilardoni2006a} presented a completely different
    approach which obtains many of the results of 
    theorem~\ref{theorem:special-cases-pinsker}.\footnote{We were unaware
    of these two papers until completing the results presented in the main
    paper. }
    Gilardoni~\cite{Gilardoni2006a} improved Vajda's bound  slightly to $
    \mathrm{KL}(P,Q) \ge \ln\frac{2}{2-V} - \frac{2-V}{2}\ln\frac{2+V}{2}$.

    Gilardoni~\cite{Gilardoni2006,Gilardoni2006a}
    presented a general tight lower bound for $\II_f=\II_f(P,Q)$ in terms of
    $V=V(P,Q)$ which is difficult to evaluate explicitly in general:
    \[
    \II_f \ge
    \frac{V}{2}\left(\frac{f[g_R^{-1}(k(1/V))]}{g_R^{-1}(k(1/V))-1} +
    \frac{f[g_L^{-1}(k(1/V))]}{1-g_L^{-1}(k(1/V))}\right),
    \]
    where $k^{-1}(t)=\frac{1}{2}\left(\frac{1}{1-g_L^{-1}(t)} +
    \frac{1}{g_R^{-1}(t)-1}\right)$ and of course $k(u)=(k^{-1})^{-1}(u)$; 
    and $g(u)=(u-1)f'(u)-f(u)$, $g_R^{-1}[g(u)]=u$
    for $u\ge 1$ and $g_L^{-1}[g(u)]=u$ for $u\le 1$. 
    He presented a new parametric form for $\II_f=\mathrm{KL}$ in terms of
    Lambert's $W$ function.
    In general, the
    result is analogous to that of
    Fedotov et al.~\cite{FedotovTopsoe2003} in that it is in a parametric form which, 
    if one wishes
    to evaluate for a particular $V$, one needs to do a one dimensional
    numerical search --- as complex as (\ref{eq:my-pinsker-KL}).
    However, when $f$ is such that
    $\II_f$ is symmetric, this simplifies to the elegant form $\II_f\ge
    \frac{2-V}{2}f\left(\frac{2+V}{2-V}\right)-f'(1)V$. 
    He presented explicit special cases for $h^2$, $J$,$\Delta$ and $I$
    identical to the results in Theorem \ref{theorem:special-cases-pinsker}.
    It  is not apparent how 
    the approach of Gilardoni~\cite{Gilardoni2006,Gilardoni2006a} could be extended to
    more general situations such as that in Theorem
    \ref{theorem:general-pinsker} (i.e.~$n>1$).

    Bolley and Villani~\cite{BolleyVillani2005}
    considered {\em weighted} versions of the Pinsker inequalities (for
    a weighted generalisation of Variational divergence) in terms of
    KL-divergence that are related to transportation inequalities.

    \subsection*{Acknowledgements}
    This work was supported by the Australian Research Council and NICTA; 
    an initiative of the Commonwealth Government under Backing Australia's
    Ability.

\bibliographystyle{plain}
\bibliography{pinsker}
\end{document}